\newcommand{\textfrac}[2]{#1/#2}
\newtheoremstyle{mytheoremstyle} 
    {10pt}                    
    {8pt}                    
    {\itshape}                   
    {}                           
    {\scshape}                   
    {.}                          
    {.5em}                       
    {}  
\theoremstyle{mytheoremstyle}
\newtheorem{theorem}{Theorem}[section]
 \newtheorem{corollary}[theorem]{Corollary}
 \newtheorem{lemma}[theorem]{Lemma}
 \newtheorem{proposition}[theorem]{Proposition}
 \newtheoremstyle{definition} 
    {8pt}                    
    {5pt}                    
    {}                   
    {}                           
    {\scshape}                   
    {.}                          
    {.5em}                       
    {}  
\theoremstyle{definition}
\newtheorem{definition}[theorem]{Definition}
\newtheorem*{acknowledgements}{Acknowledgements}
\newtheorem{remark}[theorem]{Remark}
\def\[#1\]{%
\begin{align}#1\end{align}%
}
\newenvironment{proof*}[1]{%
\begin{proof}}
{\end{proof}}
\renewcommand{\mathbf}{\boldsymbol}
\newcommand{\tqs}{\mathrel{:}}
\newcommand{\bbC}{\mathbb{C}}
\newcommand{\C}{\bbC}
\newcommand{\bbR}{\mathbb{R}}
\newcommand{\R}{\bbR}
\newcommand{\bbZ}{\mathbb{Z}}
\newcommand{\Z}{\bbZ}
\newcommand{\kH}{\mathfrak{H}}
\newcommand{\rmI}{\mathrm I}
\newcommand{\defi}{\coloneqq}
\newcommand{\grp}[1]{\mathrm{#1}}
\newcommand{\grpO}{\grp{O}}
\newcommand{\modulus}[1]{\lvert #1\rvert}
\DeclareMathOperator{\liftarg}{\widetilde{\arg}}
\newcommand{\liftsqrt}[1]{\sqrt[\sim]{#1}}
\newcommand{\liftR}{\widetilde{R}}
\newcommand{\rmH}{\mathrm H}
\title{\textbf{Borel summability of the $\mathbf{1/N}$ expansion\\in quartic $\mathbf{\grp{O}(N)}$-vector models}}
\author[1]{L. Ferdinand}
\author[2,3,4]{R. Gurau}
\author[2]{C.I. Perez-Sanchez}
\author[5]{ F. Vignes-Tourneret}
\affil[1]{\normalsize \itshape Laboratoire de Physique des 2 infinis Irène Joliot-Curie - IJCLab - \linebreak  UMR 9012, CNRS, Universit\'e Paris-Saclay \linebreak Bât. 210, 91405 Orsay cedex, France 	\authorcr \hfill}
\affil[2]{\normalsize\itshape 
	Institut f{\"u}r Theoretische Physik der Universität Heidelberg \linebreak Philosophenweg 19, 69120 Heidelberg, Germany
	\authorcr \hfill}
\affil[3]{\normalsize \itshape 
	CPHT, CNRS, Ecole Polytechnique, Institut Polytechnique de Paris\linebreak  Route de Saclay,  91128 PALAISEAU, 
	France
	\authorcr \hfill}
\affil[4]{\normalsize\itshape 
	Perimeter Institute for Theoretical Physics\linebreak  31 Caroline St. N, N2L 2Y5, Waterloo, ON,
	Canada	\authorcr \hfill}
\affil[5]{\normalsize\itshape Univ Lyon, CNRS, Université Claude Bernard Lyon 1\linebreak
UMR 5208, Institut Camille Jordan\linebreak
    F-69622 Villeurbanne, France
	\authorcr \hfill
	\authorcr
	Emails: \texttt{leonard.ferdinand@ijclab.in2p3.fr,
	gurau@thphys.uni-heidelberg.de, perez@thphys.uni-heidelberg.de, 
	vignes@math.univ-lyon1.fr}
      \authorcr \hfill}
\begin{document}
\maketitle

\begin{abstract}
    We consider a quartic $\grpO(N)$-vector model. Using the Loop
    Vertex Expansion, we prove the Borel summability in $1/N$ along
    the real axis of the partition function and of the connected
    correlations of the model. The Borel summability holds uniformly
    in the coupling constant, as long as the latter belongs to a cardioid like domain of the complex plane, avoiding the negative real axis.
\end{abstract}

\section{Introduction}

The Loop Vertex Expansion (LVE) was introduced by Rivasseau in 2007 \cite{Rivasseau2007aa} as a new tool in constructive field theory in order to deal with matrix fields. It was then successfully applied to general tensor fields \cite{Gurau2013ac,Delepouve2014aa,Rivasseau2016aa}.
For a general exposition in zero dimensions, close to the topic of this
article, see \cite{Rivasseau2009aa}.  The outcome of the LVE is
an expression of the free energy, as well as the generating function
of connected moments (or cumulants) as a sum over trees instead of
connected graphs.
As the number of trees increases only exponentially with the number of vertices and the contribution
of each tree is exponentially bounded, the resulting series is convergent. The two
ingredients of this expansion are the Hubbard–Stratonovich
 \cite{Hubbard,Stratonovich} intermediate field
representation and the Brydges-Kennedy--Abdesselam-Rivasseau (BKAR) formula
\cite{Brydges1987aa,Abdesselam1995aa}. 

In this paper we study the Borel summability in $1/N$ of the free
energy and the cumulants of the quartic $\grpO(N)$-vector model in
zero dimensions using the LVE (see Section \ref{sec-model} for the definition of the model). Note that here we are not interested
in the pertubative expansion (the expansion at small coupling
constant), which is well-understood for the quartic $
\grpO(N)$-vector model and is Borel summable in $0$ dimensions \cite{Rivasseau2007aa} and in $2$ dimensions \cite{Eckmann1974aa}. On the contrary, Borel summability in $1/N$ is less explored. The associated two-dimensional Euclidean quantum field theory was studied in \cite{BillionnetRenouard}, where the authors prove the 
 Borel summability of the partition function and of the moments of the
$\frac{g}{N}\norm{\phi}^{4}_2$ measure. But they discuss neither the free energy
nor the cumulants. Passing between the two is rather non trivial as one 
needs to take a logarithm.  The raison d'\^etre of the LVE is to take this
logarithm rigorously and uniformly in $N$. A related model, the spherical $\grpO(N)$ model (or
non-linear $\sigma$-model), has been studied in \cite{Kupiainen1980aa, Frohlich1982ld} where the authors 
showed that the partition function and the correlation functions 
at high enough temperature are Borel summable in $1/N$.
However, contrary to the model we study here, the spherical $\grpO(N)$ model does not have any 
issues of convergence at large field as the field is restricted to belong to 
the sphere $\mathbb S^{N-1}$. 

Techniques similar to the ones we use in this paper have been introduced 
in \cite{Gurau2014aa} for $N\times N$ matrices. However, only the Borel summability of the
perturbative expansion in the coupling constant has been established 
in \cite{Gurau2014aa}: the status of the  $1/N$ series has not been analyzed. The generalization of Borel summability results in $1/N$ to the case of matrices is not 
straightforward: contrary to the vector case, we do not have a representation of the partition function 
(with sources) in which $N$ is just a parameter. Consequently it has been impossible so far to prove that such functions can be extended to analytic functions in $1/N$ in some domain.\\

In this article  the free energy and the generating function of cumulants
of the quartic $\grpO(N)$ vector model are considered 
as functions of the coupling constant $g$ and of $1/N$. We 
look for the largest domain in the
$(g,1/N)$-plane allowing their bivariate analytic continuation. 
After introducing the model in Section \ref{sec-model}, we present both the main tools
and the two main results in Section \ref{sec:main}, namely the analyticity (Thm. \ref{THM1}) and the Borel summability (Thm. \ref{THM2})  domains of the free energy and the cumulants. We obtain that if $\lvert\arg g+\arg 1/N\rvert<3\pi/2$, the free energy and the cumulants are analytic in a cardioid shaped domain in $g$ and, for $\modulus{\arg g}<{\pi}$ they are Borel summable in $1/N$ along
the real axis uniformly in $g$ for $g$ in a slightly smaller cardioid domain. The proofs of these theorems are presented in Sections \ref{sec4} and  \ref{sec5},
respectively. In order to keep this
article self-contained, we recall 
BKAR formula in Lemma \ref{thm:BKAR}, but other useful tools also appear in the appendix.

\footnotesize
\begin{acknowledgements}
L.F. is supported by the EDPIF. R. G. and C.I.P-S. have been supported by the European Research Council (ERC) under the European Union’s Horizon 2020 research and innovation program (grant agreement No818066) and by the Deutsche Forschungsgemeinschaft (DFG, German Research Foundation) under Germany’s Excellence Strategy EXC-2181/1 -
390900948 (the Heidelberg STRUCTURES Cluster of Excellence). 
\end{acknowledgements}
\normalsize

\section{The model and the partition function}
\label{sec-model}
Before introducing the model, let us adopt the following notation:
\begin{itemize}
    \item We denote by $I_n$ the identity matrix on $\R^n$ and by $\mathbbm{1}_n$ the $n\times n$ matrix with all entries equal to $1$.
    \item Let $C \in M_n(\mathbb{R})$ be symmetric positive and $X,Y
      \in \R^n$. We write $\langle X,Y \rangle_C$ for $\sum_{1\leq i,j
        \leq n} X_i C_{ij} Y_j$. If $C=I_n$, we omit it. We
      denote $\langle X,X \rangle$ by $\norm{X}^2$. Whenever $X
      \in \mathbb{R}^n$ is the argument of a function $F :
      \mathbb{R}^n \rightarrow \mathbb{C}$, we write 
      \[\nonumber
      \langle
      \partial,\partial \rangle_C F(X)=
         \sum_{1 \leq i,j \leq n} C_{ij} 
      \frac{\partial^2F}{\partial{x_i}\partial{x_j}}(X) \; .
    \]
    \item Let $C \in M_n(\R)$ be symmetric positive semi-definite. We denote by
      $\mu_C$ the centered Gaussian probability distribution of
      covariance $C$ on $\R^n$. Note that it exists and is unique 
      (see appendix~\ref{gaussexp}) even if $C$ is degenerate.
    \item If $F:\mathbb{R}^n \rightarrow \mathbb{C}$, we denote $\mathbb{E}_C[F(X)]$ the expectation of $F$ with respect to $\mu_C$:
    \begin{equation} \nonumber
        \mathbb{E}_C[F(X)]=\int d\mu_C(X) F(X)=[e^{\frac{1}{2}\langle \partial,\partial \rangle_C}F(X)]_{X=0} \;.
    \end{equation}
    \item We write $a \lesssim b$ if there is a constant $K >0$ such
      that $a \leq K b$. If we want to specify that $K$ depends on some parameter 
      $\alpha$, we write $a \lesssim_\alpha b$.
    \item Throughout this paper, we denote $1/N$ by $\epsilon$ when promoted to a complex variable.
   \end{itemize}

Let $N$ be a positive integer and $g\in \{z\in\C\mid \Re z>0\}$. The
zero-dimensional quartic $\grpO(N)$-vector model is a probability distribution
$\nu$ on $\R^N$ defined as a perturbed Gaussian distribution in the following way: denoting by $\mathbb{E}$ the expectation with respect
to $\nu$, for all $F:\R^N\rightarrow\C$ $\nu$-mesurable, the
expectation of $F$ is
\begin{align}\nonumber
    \mathbb{E}[F(X)]=\frac{\mathbb{E}_{I_N}[e^{-\frac{g}{8N}\norm{X}^4}F(X)]}{\mathbb{E}_{I_N}[e^{-\frac{g}{8N}\norm{X}^4}]} \;.
\end{align}
The Fourier-Laplace transform of the measure, also known in the physics literature as the 
\textit{partition function with sources} $J\in \R^N$, denoted $Z(g,1/N;J)$, is:
\begin{align}  \nonumber
     Z\big(g,\frac1N;J\big) = \mathbb{E}_{I_N}[e^{-\frac{g}{8N}\norm{X}^4}]\,\mathbb{E}[e^{\sqrt{N} \langle J,X\rangle}]=\mathbb{E}_{I_N}[e^{-\frac{g}{8N}\norm{X}^4+\sqrt{N} \langle J,X\rangle}] \;.
\end{align}
In particular, the \textit{partition function} of $\nu$, $Z(g,1/N;0) =  \mathbb{E}_{I_N}[e^{-\frac{g}{8N}\norm{X}^4}]$, is the normalisation constant of that measure.
\begin{remark}
Note that we have made a particular choice of scaling of the
sources $J$ with $N$. This scaling ensures that all the cumulants (see below for details) are non-trivial
in the large $N$ limit. In the absence of this scaling, in the large $N$ limit
the $\grpO(N)$-vector model is a Gaussian model with a
complicated covariance corresponding to the resummation of the dominant diagrams in the large $N$ limit, the so-called cactus diagrams. 
\end{remark}

From now on, we will switch to an integral notation, more adapted to
the LVE, and more reminiscent of the functional integration in quantum
field theory. In particular, in accordance to the usual notation of
quantum field theory, we denote $\phi \in \R^N$ the random vector so
that the partition function with sources rewrites:
\begin{align}\label{ON_in0dim}
    Z\big(g,\frac1N;J\big)&=\int_{\R^N}\frac{d^N\phi}{(2\pi)^{N/2}}\; e^{-\frac
      12\norm\phi^2-\frac g{8N}\norm\phi^4+ \sqrt{N} \langle
      J,\phi\rangle}=\int d\mu_{I_N}(\phi)\; e^{-\frac
      g{8N}\norm\phi^4+ \sqrt{N} \langle J,\phi\rangle} \;.
\end{align}

Our aim is to study the expansion in $1/N$ of
the partition function and the cumulants of the measure $\nu$. At fixed
$N\in\Z_{>0}$, the integral in eq.~\eqref{ON_in0dim} is absolutely
convergent iff $\Re g\geq 0$ and defines $ Z(g,\tfrac{1}{N} ;J)$ as a holomorphic
function of $g$ for all $g \in \{ z\in \C \mid
\Re z>0\}$. 

In \cite{Rivasseau2007aa} it was noted that performing a change of variables 
(known as the Hubbard-Stratonovich transformation, or intermediate field
representation) one can obtain a convergent expansion for the logarithm of the partition function. We thus insert in eq.~\eqref{ON_in0dim} the Hubbard-Stratonovich intermediate field representation ($\imath = \sqrt{-1}$):
\begin{equation} \nonumber
 e^{-\frac{x^2}{2}}={\frac{1}{\sqrt{2 \pi }}} \int_{\mathbb{R}} dy \; e^{
  - \frac{y^2}{2} + \imath x y }   = \int d\mu_{1}(y) \; e^{\imath xy} \;,
\end{equation}
for the quartic interaction term, $x=\frac{1}{2}\sqrt{g/N}\norm\phi^2$ and obtain:
\begin{align*}
 Z\big(g,\frac 1N ;J\big) & =  \int d\mu_{{I}_N}(\phi)\int d\mu_1(\sigma) \;
 e^{\imath \frac{1}{2}\sqrt{g/N}\, \norm\phi^2 \sigma
   +\sqrt{N} \langle J, \phi \rangle} \crcr &= \int
 d\mu_{1}(\sigma) \; e^{\frac{N}{2}\ln R(\sigma, g/N) +
   \frac{N}{2}R(\sigma,g/N)\,\norm{J}^2}\nonumber \\&=\int
 d\mu_{\frac1N}(\sigma) \; e^{\frac{N}{2}\ln R(\sigma, g) +
   \frac{N}{2}R(\sigma,g)\,\norm{J}^2}
\end{align*}
where $R:\C^2 \setminus \{ (\frac{ 1}{\imath\sqrt{z}},z) \tqs z \in
\C^* \} \rightarrow \C$, $R(\sigma,z) =  {(1-\imath\sqrt{z}\sigma)}^{-1}$ is called the resolvent.
\begin{remark}
This transformation renders the O($N$) invariance explicit: the partition function depends only on the norm of the sources.    
\end{remark}

We observe that, at fixed $N\in\Z_{>0}$, $Z$ can
be analytically continued in $g$ to all $\C \setminus
\R_-$. The intermediate field representation also makes
$1/N$ an explicit parameter \cite{Kupiainen1080aa} in an integral representation of $ Z(g,1/N;J)$, contrary to eq.~\eqref{ON_in0dim}  where $N$ is also present implicitly in the dimension of the integral. One can then study the analyticity properties of $Z(g,1/ N;J)$ seen as a function of the variable $\epsilon=1/N$ that, since we are interested in Borel summability along the positive real axis, we promote to $\rmH=\{z\in\C\mid \Re z>0\}$. We parameterize $\C^*$ as $\{(\modulus{z},\alpha)\in\R_+^*\times (-\pi,\pi]\}$ and for $z=(\modulus{z},\alpha)\in\C^*$, we write $\arg z=\alpha$, and we use the same parametrization for $\rmH=\{z\in\C^*\mid \modulus{\arg z}<\pi/2\}$. Regarding $g$, since it appears in a square root in the resolvent, it is wiser to take it to be an element of the Riemann surface of the square root, whose basic properties are recall hereafter:
\begin{definition}[Riemann surface of the square root]
Let us denote by $\sqrt{\phantom{z}}$ the principal branch of the complex square root defined on $\C\setminus\R_-$. Let $\Sigma$ be the associated Riemann surface. We write $\liftsqrt{\phantom{\omega}}$ for the analytic continuation of $\sqrt{\phantom{z}}$ to $\Sigma$. $\Sigma$ is a $2$-sheeted covering of $\C^*$ and can be parameterized as $\{(\modulus{z},\alpha)\in\R_+^*\times (-2\pi,2\pi]\}$ and for $z=(\modulus{z},\alpha)\in\Sigma$, we write $\liftarg z=\alpha$. Let $\rmI:z=(\modulus{z},\alpha)\in \C\setminus\R_-\mapsto (\modulus{z},\alpha)\in\Sigma$ be the canonical injection of $\C\setminus\R_-$ into $\Sigma$ and $\Pi$ be the projection of $\Sigma$ onto $\C^*$ namely $\Pi(\modulus{z},\alpha)=\modulus{z} e^{\imath\alpha}$. The two sheets of $\Sigma$ correspond to the two possible determinations of the square root on $\C\setminus\R_-$ : for $z\in\Sigma$ such that $\liftarg z\in(-\pi,\pi)$, $\liftsqrt z=\sqrt{\Pi z}$ and if $\liftarg z\in(-2\pi,-\pi)\cup(\pi,2\pi]$, $\liftsqrt z=-\sqrt{\Pi z}$ (but note that $(\liftsqrt{z})^2$ is always equal to $\Pi z $). We will denote by $\Sigma^+$ (resp.\@ $\Sigma^-$) the sheet of $\Sigma$ corresponding to $\sqrt{\phantom{z}}$ (resp.\@ to $-\sqrt{\phantom{z}}$), and we will also denote $\liftR(\sigma,z)\defi (1-\imath\sigma\liftsqrt{z})^{-1}$.
\end{definition}
Therefore, from now on, the coupling constant $g$ is an element of $\Sigma$, and we aim to find the maximal domain of analyticity of the free energy and the cumulants as functions of $(g,\epsilon)\in
\Sigma\times\rmH$. Since this will bring us to constantly deal with with the arguments of $g $ and $\epsilon$, in the rest of the article, we use the following convention:
\begin{equation*}
  \text{we will use indifferently $\liftarg g$ and $\varphi$, as well as $\arg \epsilon$ and $\theta$}.
    \end{equation*}
At this point, partition function rewrites as
\begin{align*}
 Z(g,\epsilon ;J) &= \int d\mu_{\epsilon}(\sigma)\; e^{\frac{1}{2\epsilon}\ln \liftR(\sigma, g) +
   \frac{1}{2\epsilon}\liftR(\sigma,g)\,\norm{J}^2}\,,
\end{align*}
which enables to analytically continue it from $\rmI(\R_+^*)\times\{1/N\mid N\in \Z_{>0}\}$ to $\rmI(\C\setminus\R_-)\times \rmH$. In order to extend this continuation to some wider subdomain of $\Sigma\times\rmH$, for all $\psi \in (\theta-\pi/2,\theta+\pi/2)$ we define $ Z_\psi(g,\epsilon ;J)$ by
\begin{align}
     Z_\psi(g,\epsilon ;J) &= \int_{e^{\imath \frac{ \psi}{2}
       }\R} \frac{d\sigma}{\sqrt{2\pi\epsilon}} \;
     e^{-\frac{\sigma^2}{2\epsilon}+\frac{1}{2\epsilon}\ln \liftR(\sigma, 
       g) + \frac{1}{2\epsilon}\liftR(\sigma, g)\,\norm{J}^2}\nonumber\\
     &=
     \int_{\R} \frac{d\sigma}{e^{-\imath \frac\psi2}\sqrt{2\pi \epsilon }} \;
     e^{-\frac{\sigma^2}{2\epsilon e^{-\imath
           \psi}}+\frac{1}{2\epsilon}\ln \liftR(\sigma e^{\imath \frac\psi2}, g)
       + \frac{1}{2\epsilon}\liftR(\sigma e^{\imath \frac\psi2}, g)\,\norm{J}^2}
     \nonumber
     \\ &= \int d\mu_{\epsilon e^{-\imath \psi}}(\sigma) \;
     e^{\frac{1}{2\epsilon}\ln \liftR(\sigma e^{\imath \frac\psi2} , g) +
       \frac{1}{2\epsilon}\liftR(\sigma e^{\imath \frac\psi2}, g)\,\norm{J}^2} \;. \label{eq-Ztiltcontour}
\end{align}
The integral is convergent and, furthermore, $Z_\psi(g,\epsilon ;J)=Z(g,\epsilon ;J)$ is independent of $\psi$. Indeed, let $s_{g,\epsilon,J}:\sigma\in\C\mapsto \frac{1}{\sqrt{2\pi \epsilon }} \;
     e^{-\frac{\sigma^2}{2\epsilon}+\frac{1}{2\epsilon}\ln \liftR(\sigma , g)
       + \frac{1}{2\epsilon}\liftR(\sigma , g)\,\norm{J}^2} $ so that $ Z_\psi(g,\epsilon ;J)=\int_{\R}e^{\imath \frac\psi2} s_{g,\epsilon,J}(\sigma e^{\imath \frac\psi2})d\sigma$. We then have $\frac{d}{d\psi}Z_\psi(g,\epsilon ;J)=\int_{\R}\frac{\imath}{2}e^{\imath \frac\psi2}d\sigma [s_{g,\epsilon,J}(\sigma e^{\imath \frac\psi2})+\sigma {s'}_{g,\epsilon,J}(\sigma e^{\imath \frac\psi2})]=0$ by integration by part.\\

Before going to the analyticity domain of the free energy and the cumulants, for the sake of comparison, we note the following result:
\begin{proposition}\label{prop1}
The partition function with sources of the zero-dimensional $
\grpO(N)$-vector model, $ Z(g,\epsilon ;J)$, can be analytically
continued in $(g,\epsilon)$ from $\rmI(\R_+^*)\times \{1/N \mid N \in \Z_{>0}\}$ to the following domain of $\Sigma\times\rmH$:
\begin{equation*}
   \mathfrak{B} = \Big\{(g,\epsilon) \in \Sigma\times\rmH \,\big\vert\, \liftarg g+\arg \epsilon \in \big(-\frac{3\pi}{2},\frac{3\pi}{2}\big) \Big\}.
\end{equation*}
\end{proposition}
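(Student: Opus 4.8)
\begin{proof*}{Proof sketch}
The plan is to glue together the rotated contour representations $Z_\psi$ of \eqref{eq-Ztiltcontour}. For a real angle $\psi$, set
\[
 D_\psi \defi \enstq{(g,\epsilon)\in\Sigma\times\rmH}{\modulus{\arg\epsilon-\psi}<\tfrac{\pi}{2}\ \text{ and }\ \liftarg g+\psi\notin\pi+2\pi\Z}.
\]
First I would show that \eqref{eq-Ztiltcontour} defines $Z_\psi(g,\epsilon;J)$ as an absolutely convergent integral, holomorphic in $(g,\epsilon)$, on the open set $D_\psi$. The second condition cutting out $D_\psi$ is exactly what keeps $1-\imath\sigma\liftsqrt g$ inside $\C\setminus\R_{\le 0}$ as $\sigma$ runs over $e^{\imath\psi/2}\R$: that image is the affine line through $1$ with direction $-\imath e^{\imath\psi/2}\liftsqrt g$, and it meets $\R_{\le 0}$ if and only if the direction is real, i.e.\@ when $\liftarg g+\psi\equiv\pi\pmod{2\pi}$. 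Hence on the contour $\liftR$ has no pole, $\ln\liftR$ is unambiguously minus the principal logarithm of $1-\imath\sigma\liftsqrt g$ and is holomorphic in $(g,\epsilon)$ at fixed $\sigma$, and convergence and holomorphy then follow from dominated convergence and Morera's theorem: the resolvent and source factors are bounded on the contour locally uniformly in $(g,\epsilon)$, while the Gaussian weight $e^{-\sigma^2/(2\epsilon e^{-\imath\psi})}$ decays because $\modulus{\arg\epsilon-\psi}<\pi/2$, which dominates the integrand.

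The delicate step is the gluing. One cannot simply take $\bigcup_{\psi}D_\psi$: this union is all of $\Sigma\times\rmH$, yet the $Z_\psi$ do not agree on it, because when $\liftarg g+\psi$ crosses $\pi\pmod{2\pi}$ a branch point of the integrand crosses the integration contour. Instead I would work with the smaller open sets
\[
 V_\psi \defi \enstq{(g,\epsilon)\in\mathfrak{B}}{\liftarg g+\psi\in(-\pi,\pi)\ \text{ and }\ \modulus{\arg\epsilon-\psi}<\tfrac{\pi}{2}}\ \subset\ D_\psi .
\]
An elementary inequality in the two angles shows that these two constraints on $\psi$ are jointly satisfiable precisely when $\liftarg g+\arg\epsilon\in(-3\pi/2,3\pi/2)$, whence $\bigcup_{\psi\in\R}V_\psi=\mathfrak{B}$. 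If $(g,\epsilon)\in V_\psi\cap V_{\psi'}$ then $\psi$ and $\psi'$ both lie in $(\arg\epsilon-\pi/2,\arg\epsilon+\pi/2)$ and in $(-\pi-\liftarg g,\pi-\liftarg g)$; the latter, being a length-$2\pi$ interval whose endpoints belong to the forbidden set $\pi-\liftarg g+2\pi\Z$, contains no forbidden angle in its interior, so $(g,\epsilon)\in D_{\psi''}$ for every $\psi''$ on the segment from $\psi$ to $\psi'$, and the identity $\frac{d}{d\psi}Z_\psi=0$ recalled after \eqref{eq-Ztiltcontour} (whose boundary terms vanish on $D_\psi$ by the same domination) gives $Z_\psi(g,\epsilon;J)=Z_{\psi'}(g,\epsilon;J)$. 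By the sheaf property of holomorphic functions the pieces $Z_\psi|_{V_\psi}$ therefore glue to a single holomorphic function $F$ on $\mathfrak{B}$.

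To conclude, note that $\rmI(\C\setminus\R_-)\times\rmH$ — where $\liftarg g\in(-\pi,\pi)$ and $\modulus{\arg\epsilon}<\pi/2$ — is contained in $V_0\subset\mathfrak{B}$, and there $F=Z_0=Z(g,\epsilon;J)$ by the definition of $Z$ given just before \eqref{eq-Ztiltcontour}; since $Z$ is already holomorphic on that set and $\mathfrak{B}$ is connected, $F$ is the unique analytic continuation of $Z(g,\epsilon;J)$ from $\rmI(\R_+^*)\times\{1/N\mid N\in\Z_{>0}\}$ to $\mathfrak{B}$. I expect the real obstacle to be pinning down the domain correctly: seeing that the naive union of the $D_\psi$ overshoots, and that the honest domain is cut out exactly by $\liftarg g+\arg\epsilon\in(-3\pi/2,3\pi/2)$, the obstruction being the existence of a globally consistent rotation angle $\psi$; after that, only the (fiddly but routine) bookkeeping of arguments on the two-sheeted surface $\Sigma$ and the standard estimates justifying the interchanges of limits remain.
\end{proof*}
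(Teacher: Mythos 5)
Your proposal is correct and follows essentially the same route as the paper: it rests on the tilted-contour representations $Z_\psi$ of \eqref{eq-Ztiltcontour}, holomorphy of each on the region where the contour avoids the singular ray of the integrand and the Gaussian factor converges (the paper's $\mathfrak{A}_\psi^+$ together with $\lvert\psi-\arg\epsilon\rvert<\pi/2$, your $V_\psi$), $\psi$-independence via the integration-by-parts identity, and the observation that the union over $\psi$ is exactly $\mathfrak{B}$. Your explicit gluing step along a segment of intermediate angles merely makes precise the consistency that the paper asserts by restricting to the component $\mathfrak{A}_\psi^+$ and invoking $Z_\psi=Z$, so it is a careful refinement of the same argument rather than a different one.
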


\begin{remark}
In the sequel, to prove Borel summability of the free energy or the cumulants of $\nu$, we will rely on the Nevanlinna-Sokal theorem \cite{Sokal1980aa}. One important hypothesis of this theorem is analyticity in a disk tangent to the imaginary axis and centered at a positive real number. We call such a domain a Sokal disk, see remark~\ref{remarksokaldisk}. Note that for $g \in \mathrm I(\C\setminus\R_-)$, the analyticity domain in the $\epsilon$-plane of the partition function with sources contains indeed a Sokal disk as for all $\theta \in
(-\pi/2,\pi/2)$ and $\varphi\in(-\pi,\pi)$, $\varphi+\theta\in(-3\pi/2,3\pi/2)$.
\end{remark}
In order to prove Proposition \ref{prop1}, we need the following bound on the resolvent:
\begin{lemma}
For all $(\sigma, g) \in \mathbb{C}^*\times\Sigma $,
\begin{equation}
\modulus{\liftR(\sigma,g)} \leq \frac{1}{\modulus{\cos({\arg \sigma+ \frac{1}{2}\liftarg{g}})}}\,.
\label{resbound}
\end{equation}
\end{lemma}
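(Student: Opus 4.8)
The plan is to reduce the claimed estimate to a one-line completion of a square. Since $\liftR(\sigma,g)=(1-\imath\sigma\liftsqrt{g})^{-1}$, the bound \eqref{resbound} is equivalent to the lower bound $\modulus{1-\imath\sigma\liftsqrt{g}}\ge\modulus{\cos(\arg\sigma+\tfrac12\liftarg{g})}$, so that is the inequality I would establish (with the convention that the right-hand side of \eqref{resbound} is $+\infty$, hence nothing to prove, when $\cos(\arg\sigma+\tfrac12\liftarg g)=0$; this also covers the locus where $\liftR$ has a pole).

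First I would pin down the polar form of $\liftsqrt{g}$. Writing $g=(\modulus g,\varphi)\in\Sigma$ with $\varphi=\liftarg g\in(-2\pi,2\pi]$, the definition of $\Sigma$ gives that $\liftsqrt{g}$ has modulus $\sqrt{\modulus g}$ and argument exactly $\varphi/2\in(-\pi,\pi]$: on $\Sigma^+$ (that is, $\varphi\in(-\pi,\pi)$) one has $\liftsqrt{g}=\sqrt{\Pi g}$ with argument $\varphi/2$, while on $\Sigma^-$ one has $\liftsqrt{g}=-\sqrt{\Pi g}$, which shifts the principal argument by $\pi$ and again lands on $\varphi/2$; in both cases $(\liftsqrt g)^2=\Pi g$. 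Setting $\rho\defi\modulus\sigma\sqrt{\modulus g}$ and $\beta\defi\arg\sigma+\tfrac12\liftarg g$, this yields $\imath\sigma\liftsqrt{g}=\rho\,e^{\imath(\pi/2+\beta)}=\rho(-\sin\beta+\imath\cos\beta)$, hence $1-\imath\sigma\liftsqrt{g}=1+\rho\sin\beta-\imath\rho\cos\beta$.

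Then it suffices to compute $\modulus{1-\imath\sigma\liftsqrt{g}}^2=(1+\rho\sin\beta)^2+\rho^2\cos^2\beta=1+2\rho\sin\beta+\rho^2$ and compare with $\cos^2\beta$: using $1-\cos^2\beta=\sin^2\beta$, the difference equals $\sin^2\beta+2\rho\sin\beta+\rho^2=(\rho+\sin\beta)^2\ge0$. This gives $\modulus{1-\imath\sigma\liftsqrt{g}}\ge\modulus{\cos\beta}$, and taking reciprocals proves \eqref{resbound} (the computation incidentally exhibits the near-saturation regime $\rho\approx-\sin\beta$). There is no genuine obstacle: the only point demanding care is the bookkeeping of $\arg\liftsqrt{g}$ across the two sheets $\Sigma^\pm$, after which the estimate is entirely elementary.
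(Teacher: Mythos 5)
Your proof is correct and follows essentially the same route as the paper: the paper's proof simply invokes the elementary inequality $\modulus{1-\imath z}\geq\modulus{\cos\arg z}$ (applied to $z=\sigma\liftsqrt{g}$), and your completion-of-the-square computation, together with the sheet-by-sheet check that $\arg\liftsqrt{g}=\tfrac12\liftarg g$, is just a detailed verification of that same fact.
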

\noindent
This bound is trivial for $\liftsqrt{g}\sigma\in \imath\R$, which reflects the fact that the resolvent has a pole at $\sigma=1/\imath\liftsqrt{g}$.
\begin{proof}
Directly stems from $\modulus{1-\imath z}\geq\modulus{\cos\arg z}$.
\end{proof}

\begin{proof}[Proof of Proposition \ref{prop1}]
We start with the intermediate field representation \eqref{eq-Ztiltcontour}. Let $\kH_\psi$ be the following manifold:
\begin{equation} \nonumber
    \kH_\psi\defi\big\{(\sigma,g,\epsilon)\in\C\times\Sigma\times\rmH\mid\sigma e^{\imath \frac\psi2} \liftsqrt g \in\C\setminus\imath\R\big\}.
\end{equation}
We let $f_\psi$ from $\kH_\psi$ to $\C$ be the integrand in eq.~\eqref{eq-Ztiltcontour}:
\begin{equation}\nonumber
    f_\psi(\sigma,g,\epsilon)=\frac{1}{e^{-\imath \frac\psi2} \sqrt{2\pi \epsilon}}e^{-\frac{\sigma^2}{2\epsilon e^{-\imath \psi}}+\frac{1}{2\epsilon}\ln\liftR(\sigma e^{\imath \frac\psi2}, g) + \frac{1}{2\epsilon}\liftR(\sigma e^{\imath \frac\psi2},g)\,\norm{J}^2}.
\end{equation}
$f_\psi$ is holomorphic on $\kH_\psi$ and $\int_\R f_\psi(\sigma,g,\epsilon)d\sigma$ coincides with \eqref{ON_in0dim} for $(g,\epsilon)\in\rmI(\R_+^*)\times\{1/N\mid N\in\Z_{>0}\}$. For all $\sigma\in\R$, $(g,\epsilon)\mapsto f_\psi(\sigma,g,\epsilon)$ is holomorphic on $ \mathfrak{A}_\psi\defi\{(g,\epsilon)\in\Sigma\times\rmH\mid e^{\imath \frac\psi2} \liftsqrt g\in\C\setminus\imath\R\}$, that has two connected components, namely 
\begin{equation}\label{Apsiplus}
\mathfrak{A}_{\psi}^+=\big\{(g,\epsilon)\in\Sigma\times\rmH\mid\varphi\in(-\pi-\psi,\pi-\psi)\big\}    
\end{equation}
and $ \mathfrak{A}_{\psi}^-=\{(g,\epsilon)\in\Sigma\times\rmH\mid\varphi\in (-2\pi,2\pi]\setminus(-\pi-\psi,\pi-\psi)\}$. Moreover as $\modulus{\ln\liftR(\sigma e^{\imath \frac\psi2}, g)} \le \modulus{\ln\modulus{\liftR(\sigma e^{\imath \frac\psi2}, g)}}+\pi$, thanks to the bound \eqref{resbound}, the integral of $f_\psi$ is absolutely convergent, uniformly in $g$ and $\epsilon$, on any compact of $\mathfrak{A}_\psi^+$. Thus, it defines an analytic continuation of $Z_\psi=Z$ to $\mathfrak{A}_\psi^+$. Therefore, $Z$ is analytic on $\bigcup_{\psi \in (\theta-\pi/2,\theta+\pi/2)}  \mathfrak{A}_\psi^+=\mathfrak{B}$ which concludes the proof of
Proposition~\ref{prop1}.
\end{proof}

\begin{remark}
This analytic continuation is the largest one that can be found thanks to the tilt of the contour of integration, since for $|\psi-\theta|\geq\pi/2$, the integral \eqref{eq-Ztiltcontour} becomes divergent.
\end{remark}

\begin{remark}
In order to clarify why the tilting of the contour was needed,
consider the following. Suppose we are interested
in the function $h:\{\Re z>0\}\rightarrow\C, z\mapsto
\int_{\R_+}e^{-zt}dt$ and ignore that $f(z)=1/z$. Clearly, $h$ is
analytic on its domain of definition. We aim to analytically continue $h$ to 
some maximal domain. To this end, we observe that $h_{\psi}: z=\modulus{z}e^{\imath\alpha}\mapsto
\int_{e^{\imath\psi}\R_+}e^{-zt}dt$ is analytic iff
$\modulus{\alpha+\psi}<\pi/2$. Moreover, if
$\modulus{\psi}<\pi$, the domains of analyticity of $h$ and $h_{\psi}$
overlap, and $h=h_{\psi}$ where they are both analytic. Thus $h_{\psi}$ is an
analytic continuation of $f$ to a Riemann sheet. One needs to check whether this analytic 
continuation has a discontinuity at the real negative
axis (in which case $0$ is a branch point) or a pole. In our case 
one gets a pole, but applying the same strategy to $z\mapsto
\int_{\R}e^{-zt^2}dt$ one obtains a branch point of order $2$. We apply the same strategy to $Z$.
\end{remark}

Our aim is to obtain similar results for the free energy
and the cumulants. As such quantities depend on the logarithm of the partition 
function $Z$ and $Z$ has zeroes, they will not simply inherit the analyticity properties of
$Z$: we expect that the domain of analyticity of $\ln Z$ is smaller than the one of $Z$. Since $Z(0,\epsilon;J)$ is non vanishing, for $g$ close to $0$, $Z(g,\epsilon;J)$ is non vanishing too. However, for $g$ real negative $Z$ is discontinuous and $g=0$ does not belong to the domain of analyticity of $Z$ or $\ln Z$. 
In order to identify some domain of analyticity of $\ln Z$
we will rewrite it as a uniformly convergent series of
analytic functions. This series is indexed by trees and converges for a small enough
coupling constant thereby defining $\ln Z$ in some domain. This is in contrast with the perturbative expansion which writes the partition function and the cumulants as divergent series.\\

The core of our arguments heavily relies on the \textit{Loop Vertex Expansion} (LVE) \cite{Magnen2008aa,Rivasseau2007aa}, which we now present. 

\section{The Loop Vertex Expansion of the cumulants}\label{sec:main}

In this section, we perform the LVE of the cumulants defined hereafter:
\begin{definition}[The rescaled cumulants] For all $k\geq 1$, one defines
  the \textit{rescaled cumulant }of order $2k$,
  $\mathfrak{K}^{2k}(g,\epsilon)$, by the following relation:
\begin{align} \label{scaled_cumul}
\left( \sum_{\pi\in
       P_2(2k)} \prod_{(a_i,a_j) \in \pi} \delta_{a_i,a_j} \right)
   \mathfrak{K}^{2k}_\psi(g,\epsilon)&:=  
   \epsilon \frac{\partial^{2k}}{\partial_{J_{a_1}}...\partial_{J_{a_{2k}}}} \left. \ln{Z_\psi(g,\epsilon ; J)} \right|_{J=0},
\end{align}
where $ P_2(2k)$ is the set of pairings of $2k$ elements and let $\mathfrak K^{2k}(g,\epsilon)=\mathfrak K^{2k}_{\psi=0}(g,\epsilon)$. 
\end{definition}
This scaling is chosen as to have a well-defined large-$N$ limit 
and the advantage of using $\mathfrak{K}^{2k}$ over the RHS of \eqref{scaled_cumul}
is that the former is manifestly O($N$) invariant. Since only $\mathfrak{K}^{2k}$ will 
appear below, we refer to them as the cumulants.\\

Before going to the Loop Vertex Expansion, let us introduce a few notations. First of all, in the following, we will denote by $\mathcal T_n$ be the set of all labelled trees with $n$ vertices. To a tree $T\in \mathcal T_n$, we associate the symmetric $n \times n$ matrix $W^{T}(u)$ with diagonal
entries equal to $1$ and off diagonal ones $W_{ij}^{F}(u)=W_{ji}^{F}(u)=w_{ij}^{F}(u)$
as given by eq.~\eqref{eq:BKAR}. Then, the LVE is written in terms of
$$C^n_k = \big\{ (i_1,...,i_k)\in
\{1,...,n\}^k  \mid \text{for all } a,b \in \{1,...,k\},a \neq b \Rightarrow
i_a \neq i_b \big\} \; .$$ 
The notation comes from the fact that $C_k^n$ is the
configuration space of $k$ particles on the discrete $n$-point space. With this notation at hand, the LVE allows us to express to cumulants as a sum over \textit{ciliated
  trees}, for which we adopt the following convention:
  \begin{equation}
    \text{couples $(T,\mathfrak{c})$
made of a tree $T$ (with $n$ vertices)
and \emph{cilia} $\mathfrak{c}\in C^n_k$ are denoted by $T_{\mathfrak{c}}$ }.\tag{$\star$}  \label{T_c} 
\end{equation}

For $i\in\{1,...,n\}$, we also denote $d_i(T_{\mathfrak{c}})=c(i)+d_i(T)$ with $c(i)=\textbf{1}_{i\in \mathfrak{c}}$ the coordination (or degree) of the vertex $i$ in $T_{\mathfrak{c}}$ including the cilia. The Loop Vertex expansion of the cumulants is given by the following proposition:
\begin{proposition}[Loop Vertex expansion of the cumulants] For all $1\leq k\leq n$ and $\psi \in (\theta- \pi/ 2,\theta+\pi /2)$, the cumulants are given by the series
\begin{multline*}
\mathfrak{K}_{\psi}^{2k}(g,\epsilon) = 2^{k-1} \sum_{n \geq k}
\frac{1}{n!} {\left(\frac{-\Pi g}{2}\right)}^{n-1}\\\times
\sum_{T_{\mathfrak{c}}\in \mathcal{T}_n\times C^n_k}
\int du_{T} \int d\mu_{\epsilon
  e^{-\imath \psi} W^{T}(u)}(\sigma) \bigg\{ \prod_{i=1}^n 
(d_i(T_{\mathfrak{c}}) -1)! \,
\liftR^{d_i(T_{\mathfrak{c}})}(\sigma^{(i)} e^{\imath \frac \psi2}, g) \bigg\} \;,
\end{multline*}
for all $(g,\epsilon)\in\Sigma\times\rmH $ such that it converges (in particular for $(g,\epsilon)\in\rmI(\R_+^*)\times\R_+^* $).
\end{proposition}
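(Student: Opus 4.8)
The plan is to derive this expansion directly from the intermediate field representation \eqref{eq-Ztiltcontour} by the standard Loop Vertex Expansion machinery, in three stages: (i) compute the $J$-derivatives of $\ln Z_\psi$ at $J=0$; (ii) replicate the $\sigma$-field and apply the BKAR forest formula of Lemma~\ref{thm:BKAR} to the logarithm to get a sum over trees with a Gaussian measure twisted by $W^T(u)$; (iii) compute the remaining $\sigma$-derivatives produced by both the cilia and the tree edges, collecting the resolvent powers and the combinatorial factors. First I would start from $\ln Z_\psi(g,\epsilon;J) = \ln\int d\mu_{\epsilon e^{-\imath\psi}}(\sigma)\, e^{\frac{1}{2\epsilon}\ln\liftR(\sigma e^{\imath\psi/2},g) + \frac{1}{2\epsilon}\liftR(\sigma e^{\imath\psi/2},g)\norm J^2}$ and observe that, since the $J$-dependence sits entirely in the quadratic form $\frac{1}{2\epsilon}\liftR\,\norm J^2$, taking $2k$ derivatives $\partial_{J_{a_1}}\cdots\partial_{J_{a_{2k}}}$ at $J=0$ produces, via the Leibniz/Faà di Bruno structure and the fact that $\norm J^2 = \sum_b J_b^2$, exactly the pairing sum $\sum_{\pi\in P_2(2k)}\prod\delta_{a_i,a_j}$ times a normalized expectation of $k$ insertions of $\frac{1}{\epsilon}\liftR(\sigma e^{\imath\psi/2},g)$ (each pair of $J$'s contributing one factor $\frac{1}{\epsilon}\liftR$, with an extra $1/2$ per pair from the $\frac12$ in the exponent, hence the $2^{-k}$ which combines with the $\epsilon$ prefactor in \eqref{scaled_cumul} and the normalization). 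This isolates $\mathfrak K^{2k}_\psi$ as a connected expectation: $\mathfrak K^{2k}_\psi = 2^{k-1}\,\big\langle \prod_{j=1}^k \liftR(\sigma e^{\imath\psi/2},g)\big\rangle^c_{\mu_{\epsilon e^{-\imath\psi}},\,V}$ where $V(\sigma)=\frac{1}{2\epsilon}\ln\liftR(\sigma e^{\imath\psi/2},g)$ is the loop vertex.

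Next I would perform the replica trick: write the $k$-point connected function of identical observables as a formal expansion in $V$, expand $e^{\sum_v V(\sigma_v)}$ over $n\ge k$ replicas $\sigma^{(1)},\dots,\sigma^{(n)}$ (the first $k$ carrying the cilia), with the $1/n!$ symmetry factor, under the degenerate Gaussian $d\mu_{\epsilon e^{-\imath\psi}\mathbbm 1_n}$. Then apply BKAR (Lemma~\ref{thm:BKAR}) in the replica indices: the Hessian of the Gaussian couples replicas pairwise, the forest formula selects spanning trees $T$ of the connected component containing all cilia (connectedness forces exactly $\mathcal T_n$ since all $n$ replicas must be linked), interpolates the covariance to $\epsilon e^{-\imath\psi}W^T(u)$, and each tree edge $\{i,j\}$ contributes a derivative $\partial_{\sigma^{(i)}}\partial_{\sigma^{(j)}}$ acting on $\prod_v e^{V(\sigma^{(v)})}$. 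Each $\partial_{\sigma^{(i)}}$ — whether from a tree edge or from a cilium insertion $\liftR(\sigma^{(i)}e^{\imath\psi/2},g)$ — hits the vertex factor at replica $i$; using $\partial_\sigma \liftR(\sigma e^{\imath\psi/2},g) = \imath e^{\imath\psi/2}\liftsqrt g\,\liftR^2$ and $\partial_\sigma V = \frac{1}{2\epsilon}\imath e^{\imath\psi/2}\liftsqrt g\,\liftR$, a vertex of total degree $d_i(T_{\mathfrak c})$ in $T_{\mathfrak c}$ (cilia plus tree edges) yields, after collecting all Leibniz terms, the factor $(d_i(T_{\mathfrak c})-1)!\,\liftR^{d_i(T_{\mathfrak c})}(\sigma^{(i)}e^{\imath\psi/2},g)$ times $(\tfrac{1}{2\epsilon}\imath e^{\imath\psi/2}\liftsqrt g)^{d_i(T_{\mathfrak c})}$ with appropriate sign bookkeeping. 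Multiplying the per-vertex prefactors over all $i$ and using $\sum_i d_i(T_{\mathfrak c}) = 2(n-1) + k$ (tree has $n-1$ edges, $k$ cilia), the $\epsilon$, $\imath$, $e^{\imath\psi/2}$ and $\liftsqrt g$ powers collapse: the $\liftsqrt g$'s combine to $(\liftsqrt g)^{2(n-1)+k} = (\Pi g)^{n-1}(\liftsqrt g)^k$, the $\frac{1}{2\epsilon}$'s give $(2\epsilon)^{-(2(n-1)+k)}$, and cross-checking against the $2^{k-1}$, the $\int du_T$ and the twisted Gaussian $d\mu_{\epsilon e^{-\imath\psi}W^T(u)}$ one recovers precisely the stated $(-\Pi g/2)^{n-1}$ prefactor (the sign being $\imath^{2(n-1)} = (-1)^{n-1}$ together with the BKAR minus signs on edges) and the claimed $\sigma$-independent overall constant. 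I would do this power-counting carefully on a small case ($k=1$, $n=1,2$) to fix all constants before stating it in general.

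Finally I would address convergence and the domain: the identity is first established for $(g,\epsilon)\in\rmI(\R_+^*)\times\R_+^*$, where every manipulation (the Gaussian integral, the BKAR formula applied to a genuine convergent integral, Fubini on the replica expansion) is rigorous because $\liftR$ is bounded by $1$ there; uniform bounds of the type $\modulus{\liftR}\le 1$, $0\le w_{ij}^T(u)\le 1$, and Cayley's count $\#\mathcal T_n = n^{n-2}$ together with $\sum_T\prod_i(d_i(T)-1)!\lesssim c^n$ show the tree sum converges absolutely and locally uniformly in any region where $\modulus{\liftR}$ stays bounded, so by analytic continuation the formula extends to all $(g,\epsilon)\in\Sigma\times\rmH$ for which the RHS converges. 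The main obstacle I expect is the \textbf{bookkeeping of the per-vertex combinatorial factor} $(d_i(T_{\mathfrak c})-1)!$: one must argue that distributing $d_i(T_{\mathfrak c})$ first-order derivatives (from incident tree edges and cilia) across the single factor $e^{V(\sigma^{(i)})}$ and then differentiating the resulting monomials in $\liftR$ produces, after summing all the Leibniz terms, exactly a single term $(d_i-1)!\,\liftR^{d_i}$ — this is the classical ``forest formula at a vertex'' identity ($\partial^{d}e^V$ with $\partial V = \lambda\liftR$, $\partial\liftR=\lambda\liftR^2$ giving $(d-1)!\lambda^d\liftR^d e^V$, here multiplied by $e^{-V}=\liftR^{-?}$... actually the $e^V$ factors reassemble into the untwisted Gaussian weight already absorbed), and getting the sign and the interplay with the cilia right is where care is needed; everything else is routine power-counting and standard LVE estimates.
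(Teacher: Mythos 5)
Your outline uses the same ingredients as the paper (the tilted intermediate--field representation \eqref{eq-Ztiltcontour}, the copies trick, the BKAR formula of Lemma~\ref{thm:BKAR}, and the per-vertex resolvent derivative algebra); the only structural difference is that you take the $J$-derivatives first and LVE the resulting connected correlation, whereas the paper applies BKAR to the replicated expansion of $Z_\psi(g,\epsilon;J)$ with the sources kept and differentiates in $J$ at the very end. That reordering is legitimate in principle, but as written it hides the one step that actually produces trees: BKAR cannot be applied ``to the logarithm'' or to a connected expectation. It applies to the replicated, exponential-expanded representation of the full $Z_\psi(g,\epsilon;J)$ and yields a sum over \emph{forests}; the restriction to $\mathcal T_n$ for $\ln Z_\psi$ comes from the factorization of a forest amplitude over its tree components, not from ``connectedness forcing all replicas to be linked''. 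This factorization argument (together with the Fubini-type exchange of the sum over $n$ with the Gaussian integral, which the paper controls through Lemma~\ref{thm-ZZtilde}) is the heart of the LVE and is missing from your proposal.

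The second genuine gap is the per-vertex computation, which is exactly where you defer to ``fixing constants on small cases''. In the LVE each replica carries a \emph{single} factor $\frac{1}{2\epsilon}\bigl[\ln\liftR(\sigma^{(i)}e^{\imath\psi/2},g)+\liftR(\sigma^{(i)}e^{\imath\psi/2},g)\norm J^2\bigr]$, obtained by expanding the exponential \emph{before} replicating, and the edge derivatives act on that single factor via $\partial_\sigma^{d}\ln\liftR=(\imath e^{\imath\psi/2}\liftsqrt g)^{d}(d-1)!\,\liftR^{d}$ and $\partial_\sigma^{d}\liftR=(\imath e^{\imath\psi/2}\liftsqrt g)^{d}\,d!\,\liftR^{d+1}$. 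Your version has the derivatives hitting $\prod_v e^{V(\sigma_v)}$ with the identity $\partial^d e^V=(d-1)!\lambda^d\liftR^d e^V$, which is false already for $d=2$ (it misses the $(\partial V)^2$ term); the clean factor $(d_i(T_{\mathfrak c})-1)!\,\liftR^{d_i(T_{\mathfrak c})}$ only emerges with the one-vertex-factor-per-replica structure. Your power counting is also off in a way that would not reproduce the proposition: the cilia are \emph{undifferentiated} resolvent insertions, so they carry no $\liftsqrt g$ and no extra $\tfrac1{2\epsilon}$; the total number of $\imath e^{\imath\psi/2}\liftsqrt g$ factors is $2(n-1)$ (two per edge), not $2(n-1)+k$, and the $\tfrac1{2\epsilon}$'s give $(2\epsilon)^{-n}$, partly cancelled by the $\epsilon^{n-1}e^{-\imath\psi(n-1)}$ arising when $\partial_{x_{ab}}$ is rewritten as $\epsilon e^{-\imath\psi}\partial_{\sigma^{(a)}}\partial_{\sigma^{(b)}}$. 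With your counts a leftover $(\liftsqrt g)^{k}$ and wrong powers of $\epsilon$ would survive, contradicting the stated formula. Finally, BKAR carries no minus signs on edges: the $(-\Pi g)^{n-1}$ is due solely to $(\imath\liftsqrt g)^2=-\Pi g$ per edge. The convergence discussion for $(g,\epsilon)\in\rmI(\R_+^*)\times\R_+^*$ and the extension ``wherever the series converges'' are fine in spirit, but the proof as proposed does not yet establish the identity.
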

\begin{proof}
To perform the LVE, we first expand the partition function with sources as expressed in
eq.~\eqref{eq-Ztiltcontour}. From now on, we fix
$\psi \in (\theta- \pi/ 2,\theta+\pi /2)$ and for $(g,\epsilon)\in\mathfrak{B} $ we start from:
\begin{align}\nonumber
     Z_{\psi}(g,\epsilon ;J) &= \int d\mu_{\epsilon e^{-\imath
         \psi}}(\sigma) \; e^{\frac{1}{2\epsilon}\ln \liftR(\sigma e^{\imath\frac
         \psi2},  g) + \frac{1}{2\epsilon}\liftR(\sigma e^{\imath\frac
         \psi2},  g)\,\norm{J}^2}\,.
\end{align}
We expand the exponential inside the integral and, using Fubbini's theorem, we exchange the sum and the
integral to obtain:
\begin{equation}\label{partfunc}
    Z_{\psi}(g,\epsilon ;J) = \sum_{n=0}^\infty \frac 1{(2\epsilon)^nn!} \int
    d\mu_{\epsilon e^{-\imath \psi }}(\sigma)
    \big[\ln \liftR(\sigma e^{\imath\frac
         \psi2},  g) +
     \liftR(\sigma e^{\imath\frac
         \psi2},  g)\norm{J}^2\big]^n \;.
\end{equation}
 The use of Fubbini's theorem is justified by the following lemma:
\begin{lemma}\label{thm-ZZtilde} Let $a\in (0,1/2]$, $(g,\epsilon)\in\Sigma\times \rmH$ and $\psi\in (\theta-\pi/2,\theta+\pi/2)$. Then, if $(g,\epsilon)\in\mathfrak A^+_\psi$ (see eq.~\ref{Apsiplus}), there exist two non negative reals $C_1, C_2$ independent of $\modulus{g}$ and $\modulus{\epsilon}$ such that for $n$ large enough:
\begin{equation*}
   A_n:= \frac{1}{(2\modulus{\epsilon})^nn!}\bigg| \int d\mu_{\epsilon e^{-\imath \psi}}(\sigma)
    \big[{\ln\liftR(\sigma e^{\imath\frac
         \psi2},  g) +\liftR(\sigma e^{\imath\frac
         \psi2},  g)\norm{J}^2}\big]^n \bigg|\leq \frac{C_1^n}{\modulus{\epsilon}^n n!}+\frac{C_2^n \modulus{g}^{an}}{{(\modulus{\epsilon}^n n!)}^{(1-a)}}\,.
\end{equation*}
In particular, at fixed $\psi\in (\theta-\pi/2,\theta+\pi/2)$, the sum of the $A_n$'s has infinite radius of convergence in both $\modulus{g}$ and $1/\modulus{\epsilon}$ for all $(g,\epsilon)\in\mathfrak A_\psi^+$.
\end{lemma}
\begin{proof}It is convenient to perform the change of variable $\sigma \rightarrow \sigma \sqrt{\modulus{g}}$ so that $A_n$ rewrites:
  \begin{equation*}
\frac{1}{(2\modulus{\epsilon})^n n!}\bigg| \int d\mu_{\modulus{g}\epsilon e^{-\imath \psi}}(\sigma)[\ln \liftR(\sigma e^{\imath \frac\psi2},\frac{g}{\modulus{g}} )+\liftR(\sigma e^{\imath \frac\psi2},\frac{g}{\modulus{g}} )
\norm{J}^2]^n \bigg|.
\end{equation*}
 Then, thanks to the bound in eq.~\eqref{resbound}, if $\modulus{\psi+\varphi}<\pi$,
$\modulus{\liftR(\sigma e^{\imath \frac\psi2},\frac{g}{\modulus{g}} )}
\cdot \norm{J}^2 \leq
\frac{\norm{J}^2}{\cos{(\frac{\psi+\varphi}{2})}}$. Furthermore:
\begin{equation*}
\big|
  {  \ln{ \liftR(\sigma e^{\imath \frac\psi2},\frac{g}{\modulus{g}} )} } \big|\leq
  \big[ \big(-\frac{1}{2}\ln\lvert
    \liftR^{-2}(\sigma e^{\imath \frac\psi2},\frac{g}{\modulus{g}} ) \rvert\big)^2+\pi^2 \big]^{1/2}.
  \end{equation*}
Then, using $\lvert \liftR^{-2}(\sigma e^{\imath \frac\psi2},\frac{g}{\modulus{g}} ) \rvert = 1 +2\sin{\frac{\psi+\varphi}{2}}\modulus{\sigma}+\sigma^2\leq(1+\modulus{\sigma})^2$, we get $\ln{\lvert \liftR^{-2}(\sigma e^{\imath \frac\psi2},\frac{g}{\modulus{g}} ) \rvert}\leq \ln{(1+\modulus{\sigma})^2}\leq 2(1+\modulus{\sigma})^{2a}$ for $0<a<1/2$ implying $ |
  {  \ln{\liftR(\sigma e^{\imath \frac\psi2},\frac{g}{\modulus{g}} )} }|\leq  \sqrt{(1+\modulus{\sigma})^{4a}+\pi^2 }  \le (1+\modulus{\sigma})^{2a}+\pi $ as $\pi>1$ so that:
\begin{align*}
A_n &\leq\frac{1}{(2\modulus{\epsilon})^n n!\sqrt{\cos(\psi-\theta)}}\int d \mu_{\frac{\modulus{g\epsilon}}{\cos(\psi-\theta)}}(\sigma) \big[(1+\modulus{\sigma})^{2a}+\pi +\norm{J}^2\cos^{-1}{(\frac{\psi+\varphi}{2})}\big]^n\\
&\leq \frac{1}{(2\modulus{\epsilon})^n n!\sqrt{\cos(\psi-\theta)}}\int d \mu_{\frac{\modulus{g\epsilon}}{\cos(\psi-\theta)}}(\sigma) \sum_{k=0}^n\binom{n}{k}(1+\modulus{\sigma})^{2ak} \big[\pi +\norm{J}^2\cos^{-1}{(\frac{\psi+\varphi}{2})}\big]^{n-k}\\
&\leq \frac{\big[\pi +\norm{J}^2\cos^{-1}{(\frac{\psi+\varphi}{2})}\big]^{n}}{\modulus{\epsilon}^n n!\sqrt{\cos(\psi-\theta)}}\int d \mu_{\frac{\modulus{g\epsilon}}{\cos(\psi-\theta)}}(\sigma) (1+\modulus{\sigma})^{2an}\,,
\end{align*}
where we used the fact that $\pi +\norm{J}^2\cos^{-1}{(\frac{\psi+\varphi}{2})}$ and $1+\modulus{\sigma}$ are greater than one. At this stage, rewriting:
\begin{align*}
\int d \mu_{\frac{\modulus{g\epsilon}}{\cos(\psi-\theta)}}(\sigma) (1+\modulus{\sigma})^{2an} &= \int_{\modulus{\sigma}<1} d \mu_{\frac{\modulus{g\epsilon}}{\cos(\psi-\theta)}}(\sigma) (1+\modulus{\sigma})^{2an} +\int_{\modulus{\sigma}>1} d \mu_{\frac{\modulus{g\epsilon}}{\cos(\psi-\theta)}}(\sigma) (1+\modulus{\sigma})^{2an} \\
&\leq \int_{\modulus{\sigma}<1} d \mu_{\frac{\modulus{g\epsilon}}{\cos(\psi-\theta)}}(\sigma) 2^{2an} +\int_{\modulus{\sigma}>1} d \mu_{\frac{\modulus{g\epsilon}}{\cos(\psi-\theta)}}(\sigma) (2\modulus{\sigma})^{2an} \\
&\leq 2^{2an} \bigg(\int_{\R} d \mu_{\frac{\modulus{g\epsilon}}{\cos(\psi-\theta)}}(\sigma) +\int_{\R} d \mu_{\frac{\modulus{g\epsilon}}{\cos(\psi-\theta)}}(\sigma) \modulus{\sigma}^{2an}\bigg)\\
&\leq 4^{an}\bigg(1+\frac{1}{2\sqrt{\pi}}\big[\frac{\cos{(\psi-\theta)}}{2\modulus{g\epsilon}}\big]^{1/2}\int_{\R_+}e^{-\cos{(\psi-\theta)}\frac{t}{2\modulus{g\epsilon}}}t^{an-1/2}dt\bigg)\\
&\leq 4^{an}\bigg(1+\frac{1}{2\sqrt{\pi}}\big[\frac{2\modulus{g\epsilon}}{\cos{(\psi-\theta)}}\big]^{an}\Gamma\big(an+\frac{1}{2}\big)\bigg)\,,
\end{align*}
and using the asymptotic of the gamma function we get $A_n \leq \frac{C_1^n}{\modulus{\epsilon}^n n!}+\frac{C_2^n \modulus{g}^{an}}{{(\modulus{\epsilon}^n n!)}^{(1-a)}}$.\end{proof}

We now use the copies trick as stated in Lemma~\ref{replica} (see  Appendix~\ref{replicatr} for the proof). With our integral notations,
Lemma~\ref{replica} rewrites:
\begin{lemma}[The copies trick]
Let $n$ be a positive integer, $z\in\C$ with $\Re z>0$ and 
 $F \in L^n(\R,\mu_{\textfrac{\lvert z\rvert^2}{\Re z} 
})$ a $\C$-valued function. Then $F^{\otimes n }:\R^n\to\C$, $X=(X_i)_{1\leq i\leq n} \mapsto
\prod_{1=n}^n F(X_i)$ is in $L^1(\R^n,\mu_{{\textfrac{\lvert
      z\rvert^2 \mathbbm{1}_n}{\Re z}} })$ and furthermore we
have:
\begin{align}\label{eq:replica}
 \int d\mu_{z}(x) F^n(x) = \int d\mu_{z  \mathbbm{1}_n}(X)  F^{\otimes n}(X)=\int d\mu_{z  \mathbbm{1}_n}(X) \prod_{i=1}^n F(X_{i}) \;.
\end{align}
\end{lemma}
This lemma produces for an integration variable $\sigma$, $n$ 
variables of integration that we call the copies of $\sigma$ and that
we denote by $(\sigma^{(i)})_{1\leq i \leq n}$, where we use parenthesis
to avoid the confusion with the O($N$) indices. With this notation, using eq.~\eqref{eq:replica} in 
eq.~\eqref{partfunc} we obtain that:
\begin{align} 
&Z_{\psi}(g,\epsilon ; J) =\sum_{n \geq 0} \frac{1}{(2\epsilon)^nn!} \int
d\mu_{\epsilon e^{-\imath \psi} \mathbbm{1}_{n}}(\sigma) \prod_{i=1}^n
\bigg\{ \ln \liftR(\sigma^{(i)}e^{\imath \frac\psi2}, g) +
\liftR(\sigma^{(i)}e^{\imath \frac\psi2}, g)\norm{J}^2 \bigg\}\label{square_brackets}
\\ 
={}& \sum_{n \geq 0} \frac{1}{(2\epsilon)^nn!} \left[ \exp\Big(\frac{\epsilon
    e^{-\imath \psi}}{ 2} \langle \partial, \partial \rangle_{X(x)}
  \Big) \prod_{i=1}^n \bigg\{
  \ln\liftR(\sigma^{(i)} e^{\imath \frac \psi2}, g) +
  \liftR(\sigma^{(i)} e^{\imath \frac \psi2}, g)\norm{J}^2 \bigg\}
  \right]_{\sigma^{(i)}=0,x_{ij}=1}\,,\nonumber
\end{align}
where $[X(x)]_{ii}=1$ and $[X(x)]_{ij}=x_{ij}$ for $i\neq j$. 

We now rewrite the above expansion as a sum indexed by forests over labelled vertices of analytic functions and consequently its logarithm as a sum indexed by trees over labelled vertices of analytic functions. The crucial point is that at order $n$ the number of 
such trees is of order $O(1)^n n!$, much less than the number of Feynman diagrams which is of order
$O(1)^n(2n)!$, and the expansion is convergent.

This rewriting is obtained thanks to the BKAR formula \cite{Brydges1987aa,Abdesselam1995aa}, which will be applied to the $x=(x_{ij})$ parameters.
\begin{lemma}[BKAR formula] \label{thm:BKAR}
Let $f: \mathbb{R}^{\frac{n(n-1)}{2}}\rightarrow {\mathbb C}$ be a
smooth function. If $x\in\R^{\frac{n(n-1)}{2}}$, we denote its
components by $x_{ij}$ for $i,j=1,2,\dots,n$, $i<j$. Let $F$ be a
forest with vertex set $V(F)=\{1,2,\dotsc,n\}$. If there is an edge between vertices $i$ and $j$ 
($i<j$), we write $(i,j)\in E(F)$. If both $i$ and $j$ belong to
the same connected component of $F$, we let $P_{i\leftrightarrow
  j}^{F}$ stand for the unique path in $F$ between $i$ and
$j$. Then we have:
\begin{equation}
f(x) \big|_{x=\mathbf{1}}=\sum_{F\in \mathcal F_n}\Big(\prod_{(i,j)\in E(F)} \int_{0}^{1} du_{ij}\Big) \,
 \frac{\partial^{|E(F)|} f(x) }{\prod_{(l,m)\in E(F)} \partial x_{lm}}  \bigg| _{x=w^{F}} \,,
\label{BKARformula}
\end{equation}
where $\mathbf{1}$ is the $\frac{n(n-1)}2$-vector with  all the components equal to $1$,  $\mathcal F_n$ is the set of all forests with $n$ labelled vertices, 
$|E(F)|$ is the number of edges of $F$, and
$w^{F}\in\R^{\frac{n(n-1)}{2}}$ is given by:
\begin{equation} \label{eq:BKAR}
    w^{F}_{ij}\defi\begin{cases} \inf_{(k,l)\in P_{i\leftrightarrow
        j}^{F}} u_{kl}&\text{if $i$ and $j$ belong to the same
      component of $F$,}\\ 0 &\text{otherwise.}
\end{cases} 
\end{equation} 
\end{lemma}

Notice that $\prod_{(a,b) \in E(F)}\partial_{x_{ab}} = {\epsilon}^{\lvert E(F) \rvert}e^{-\imath \psi \lvert E(F) \rvert} \prod_{(ab) \in E(F)} \partial_{\sigma^{(a)}}\partial_{\sigma^{(b)}}$ when acting on the quantity in square brackets in \eqref{square_brackets}.
Thus, by  the BKAR formula~\eqref{BKARformula}, the partition function with sources rewrites as
\begin{multline*}
Z_{\psi}(g,\epsilon ; J) = \sum_{n \geq 0} \frac{1}{(2\epsilon)^nn!} \sum_{F\in \mathcal F_n}
\int du_{F} \int d\mu_{\epsilon e^{-\imath \psi} W^{F}(u)}(\sigma){\epsilon}^{\lvert E(F) \rvert} e^{-\imath \psi \lvert E(F) \rvert}\\
\times\prod_{(a,b) \in E(F)}
     \partial_{\sigma^{(a)}}\partial_{\sigma^{(b)}}
     \prod_{i=1}^n
     \bigg\{ \ln\liftR(\sigma^{(i)} e^{\imath \frac \psi2}, g) +
    \liftR(\sigma^{(i)} e^{\imath \frac \psi2}, g)\norm{J}^2 \bigg\} \;,
\end{multline*}
where we recall that for $F\in \mathcal F_n$, $W^{F}(u)$ is the symmetric $n \times n$ matrix with diagonal
entries equal to $1$ and off diagonal ones $W_{ij}^{F}(u)=W_{ji}^{F}(u)=w_{ij}^{F}(u)$
as given by eq.~\eqref{eq:BKAR}, $d u_{F}=\prod_{(ij)\in{F}}
\int_{0}^{1} d u_{ij}$. The matrix $W^{F}(u)$ is positive.

The logarithm of the partition function with sources is the generating
function of the cumulants. Now that the partition function with sources
is expressed as a sum over forests with the amplitude of a forest factorizing over the trees of this forest, its logarithm writes as a sum over trees. As $\lvert E(T) \rvert = |V(T)|-1$, we get:
\begin{multline*}
\epsilon  \ln{Z}_{\psi}(g,\epsilon ; J)= \sum_{n \geq 1} \frac{e^{-\imath (n-1) \psi}}{2^n n!} \sum_{T \in \mathcal T_n}
  \int du_{T} \int d\mu_{\epsilon e^{-\imath \psi}   W^{T}(u)}(\sigma)  \prod_{(a,b) \in E(T)} \partial_{\sigma^{(a)}}\partial_{\sigma^{(b)}} \sum_{k=0}^n \frac{{\norm{J}^{2k}}}{k!} 
  \\ \times \sum_{\substack{1 \leq i_1,...,i_k \leq n
      \\ a \neq b
    \Rightarrow i_a \neq i_b}} \prod_{c=1}^k
\liftR(\sigma^{(i_c)}e^{\imath \frac \psi2}, g) \prod_{j \neq i_1,...,i_k}
\ln{\liftR(\sigma^{(j)}e^{\imath \frac \psi2}, g})\,,
\end{multline*}
were we recall that $\mathcal{T}_n$ stands for the set of all trees with $n$ labelled vertices. At this stage, we can rewrite the sum above in terms of $C^n_k$ and of \textit{ciliated trees} (recall the convention \eqref{T_c}). Indeed, repeatedly using the fact
that $\partial_{\sigma}^k \ln \liftR (\sigma e^{\imath \frac \psi2}, g) =
{(\imath e^{\imath \frac \psi2}\liftsqrt{ g})}^k (k-1)! \, \liftR^k(\sigma e^{\imath \frac \psi2}, g)$, $ \partial_{\sigma}^k \liftR(\sigma e^{\imath \frac \psi2}, g) =
{(\imath e^{\imath \frac \psi2}\liftsqrt{ g})}^k k! \, \liftR^{k+1}(\sigma e^{\imath \frac \psi2}, g)$,  $(\liftsqrt{g})^2=\Pi g$ and the combinatorial identity $ \sum_i d_i = 2(n-1) $, the
logarithm of the partition function with sources becomes:
\begin{multline*}
\epsilon \ln{Z}_{\psi}(g,\epsilon ; J) = \frac{1}{2} \sum_{n \geq 1}
\frac{1}{n!} {\left(\frac{- \Pi g}{2} \right)}^{n-1} \sum_{k=0}^n \frac{
  \norm{J}^{2k} }{k!}  \sum_{T_{\mathfrak{c}}\in \mathcal{T}_n\times C^n_k} \int du_{T} \int
d\mu_{\epsilon e^{-\imath \psi} W^{T}(u)}(\sigma)\\
\times
\prod_{i=1}^n \big\{ (d_i(T_{\mathfrak{c}}) -1)! \,
\liftR^{d_i(T_{\mathfrak{c}})}(\sigma^{(i)} e^{\imath \frac \psi2}, g) \big\}.
\end{multline*}

Using 
$\frac{\partial^{2k}}{\partial_{J_{a_1}}...\partial_{J_{a_{2k}}}}
\norm{J}^{2k} = 2^k k!\sum_{\pi\in P_2(2k)} \prod_{(a_i,a_j) \in \pi}
\delta_{a_i,a_j}$, we obtain the following expression, that holds true as long as both the individual integrals converge and the overall series is convergent (recall that with the convention \eqref{T_c}, $T$ is
the tree $T_{\mathfrak c}$ without cilia):
\begin{multline}\label{cumul}
\mathfrak{K}_{\psi}^{2k}(g,\epsilon) = 2^{k-1} \sum_{n \geq k}
\frac{1}{n!} {\left(\frac{- \Pi g}{2}\right)}^{n-1}\\\times
\sum_{T_{\mathfrak{c}}\in \mathcal{T}_n\times C^n_k}
\int du_{T} \int d\mu_{\epsilon
  e^{-\imath \psi} W^{T}(u)}(\sigma) \bigg\{ \prod_{i=1}^n 
(d_i(T_{\mathfrak{c}}) -1)! \,
\liftR^{d_i(T_{\mathfrak{c}})}(\sigma^{(i)} e^{\imath \frac \psi2}, g) \bigg\} \;,
\end{multline}
which concludes the proof.
\end{proof}

Our first main theorem concerns the domain in $(g,\epsilon)$ in which the rescaled cumulants are
analytic in both variables. 
\begin{theorem}[Main Theorem 1: Analyticity]\label{THM1}
For all $k \geq 1$, the cumulant of order $2k$ of
the quartic $\grpO(N)$-vector model, $\mathfrak{K}^{2k}(g,\epsilon)$, as expressed by the series~\eqref{cumul}, is analytic in $g$ and $\epsilon$ on the domain $\mathfrak{C}$ consisting in all the couples $(g,\epsilon)
\in \Sigma\times\rmH$ 
such that there exists $\psi\in(-\pi,\pi)$ for which the following inequalities hold:
\begin{subequations}%
\label{conds}
 \begin{align}
\lvert g \rvert & <\frac{1}{4} (1+\cos{(\liftarg g+\psi)})
\sqrt{\cos{(\psi-\arg \epsilon)}} \label{conds1} \; ,
\\ \lvert \liftarg g+\psi \rvert
& < \pi \label{conds2} \; ,
\\ \modulus{\psi-\arg \epsilon} & <\frac{\pi}{2} \label{conds3}   \; .
\end{align}
\end{subequations}%
\end{theorem}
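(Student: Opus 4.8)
The plan is to read the right-hand side of \eqref{cumul} as a normally convergent series of functions that are jointly holomorphic in $(g,\epsilon)$, and to pin its domain of convergence down to exactly $\mathfrak C$. Fix $\psi\in(-\pi,\pi)$ and let $\mathfrak C_\psi\subseteq\Sigma\times\rmH$ be the open set on which the three inequalities \eqref{conds1}--\eqref{conds3} hold for that particular $\psi$, so that $\mathfrak C=\bigcup_{\psi\in(-\pi,\pi)}\mathfrak C_\psi$. First I would check that each summand of \eqref{cumul} is holomorphic on $\mathfrak C_\psi$: the prefactor $(-\Pi g/2)^{n-1}$ is holomorphic on $\Sigma$; each resolvent $\liftR(\sigma^{(i)}e^{\imath\psi/2},g)$ is holomorphic on $\Sigma$ away from its pole, and \eqref{conds2} guarantees that for real $\sigma^{(i)}$ one has $\sigma^{(i)}e^{\imath\psi/2}\liftsqrt g\notin\imath\R$, so the pole never sits on the integration contour; and the Gaussian average $\int du_T\,\int d\mu_{\epsilon e^{-\imath\psi}W^T(u)}(\sigma)(\,\cdots)$ is an absolutely convergent integral over $\R^n\times[0,1]^{n-1}$ whose integrand depends holomorphically on $\epsilon\in\rmH$, convergence being ensured by \eqref{conds3}, which is exactly $\mathrm{Re}(e^{\imath\psi}/\epsilon)=\cos(\psi-\arg\epsilon)/\modulus{\epsilon}>0$. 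Holomorphy of the integrated summand, and the legitimacy of interchanging $\sum_n$, $\int du_T$, $\int d\mu$ and $\partial_g,\partial_\epsilon$, will follow a posteriori from the uniform bounds of the next step.

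The core of the argument is three estimates, all uniform over $\sigma\in\R^n$ and over the BKAR parameters $u\in[0,1]^{n-1}$ (so the $u$-integral costs nothing). \emph{(i)} By \eqref{resbound} and the fact that $\arg(\sigma e^{\imath\psi/2})\in\{\tfrac\psi2,\tfrac\psi2\pm\pi\}$ for real $\sigma\neq0$, one gets $\modulus{\liftR(\sigma e^{\imath\psi/2},g)}\le\modulus{\cos\tfrac{\psi+\varphi}2}^{-1}=\sqrt{2/(1+\cos(\psi+\varphi))}$, where $\varphi=\liftarg g$. \emph{(ii)} For the complex (and, on the boundary of the $u$-cube, degenerate) Gaussian one writes $C=\epsilon e^{-\imath\psi}W^T(u)$, so that $C^{-1}=(e^{\imath\psi}/\epsilon)\,(W^T(u))^{-1}$ and $\mathrm{Re}\,\langle\sigma,\sigma\rangle_{C^{-1}}=\cos(\psi-\arg\epsilon)\,\modulus{\epsilon}^{-1}\langle\sigma,(W^T(u))^{-1}\sigma\rangle$, whence $\int\modulus{d\mu_{\epsilon e^{-\imath\psi}W^T(u)}(\sigma)}\le\cos(\psi-\arg\epsilon)^{-n/2}$, uniformly in $T$ and in $u$ (the degenerate cases being covered by Appendix~\ref{gaussexp}). \emph{(iii)} Since $\sum_i d_i(T_{\mathfrak c})=k+2(n-1)$, extracting one factor $d_i(T)$ at each ciliated vertex and using the Prüfer enumeration $\#\{T\in\mathcal T_n:\deg T=\vec d\,\}=(n-2)!/\prod_i(d_i-1)!$ one gets
\[\nonumber
\sum_{T_{\mathfrak c}\in\mathcal T_n\times C^n_k}\,\prod_{i=1}^n(d_i(T_{\mathfrak c})-1)!\ \le\ (2n)^k\,(n-2)!\,\#\big\{\vec d\in\Z_{>0}^n:{\textstyle\sum_i}d_i=2(n-1)\big\}\ \le\ (2n)^k\,(n-2)!\,4^{n}.
\]
Combining \emph{(i)}--\emph{(iii)}, the $n$-th term of \eqref{cumul} is bounded, locally uniformly on $\mathfrak C_\psi$, by $\lesssim_k n^{k-2}\rho^{\,n-1}$ with $\rho=4\modulus{g}\big/\big((1+\cos(\psi+\varphi))\sqrt{\cos(\psi-\arg\epsilon)}\big)$, so the series converges absolutely and uniformly on compacta precisely when $\rho<1$, that is precisely under \eqref{conds1}.

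By Weierstrass' theorem, for each $\psi$ the sum of \eqref{cumul} is then holomorphic on $\mathfrak C_\psi$ and may be differentiated in $g$ and $\epsilon$ term by term, and by the Loop Vertex expansion of the cumulants it equals $\mathfrak K^{2k}_\psi(g,\epsilon)$ there; moreover \eqref{conds3} forces $\psi\in(\arg\epsilon-\pi/2,\arg\epsilon+\pi/2)$, in which range $\mathfrak K^{2k}_\psi$ is already known to be independent of $\psi$ (the independence of $Z_\psi\equiv Z$ proved after \eqref{eq-Ztiltcontour}), so this common value is nothing but $\mathfrak K^{2k}$. Since $\mathfrak K^{2k}$ thus coincides, on each member of the open cover $\{\mathfrak C_\psi\}_{\psi\in(-\pi,\pi)}$ of $\mathfrak C$, with a holomorphic function, it is holomorphic on $\mathfrak C=\bigcup_{\psi\in(-\pi,\pi)}\mathfrak C_\psi$, which is the asserted analytic continuation. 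The step I expect to be most delicate is the simultaneous bookkeeping of \emph{(ii)} and \emph{(iii)}: one must resist the naive route of bounding $\modulus{\mathcal T_n}\le n^{n-2}$ and $\prod_i(d_i-1)!$ separately, which would manufacture an uncontrollable $n^{n}$ and a spurious divergence --- the exact degree-weighted tree count is what makes the geometric ratio close with the sharp constant $\tfrac14$ of \eqref{conds1} --- and one must make sure the normalisation bound $\cos(\psi-\arg\epsilon)^{-n/2}$ is genuinely uniform over all the positive semi-definite interpolation matrices $W^T(u)$, while keeping careful track of the sheet of $\Sigma$ and the side of the resolvent's pole when invoking \eqref{resbound} on the tilted contour.
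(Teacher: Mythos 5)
Your proposal is correct and follows essentially the same route as the paper: fix $\psi$, bound the series \eqref{cumul} term by term using the resolvent bound \eqref{resbound}, the complex Gaussian bound $\cos^{-n/2}(\psi-\arg\epsilon)$ (the paper's Lemma~\ref{complexgaussianboundmain}), and a Cayley-formula count of degree-weighted ciliated trees, arriving at the same geometric ratio $\rho$ and hence the cardioid condition \eqref{conds1}, then conclude by Weierstrass and the union over $\psi$. The only cosmetic difference is that you bound the ciliated-tree sum by $(2n)^k(n-2)!\,4^n$ instead of evaluating it exactly as in Lemma~\ref{lemmacombina1}; this loses nothing since the same $4^n$ and $n^{k-2}$ growth, and hence the same constant $\tfrac14$, emerge.
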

The proof of this theorem is given is Section~\ref{sec4}.

\begin{corollary}[Domain as a Riemann sheet]\label{corol1}
At fixed $g\in\Sigma$, the domain of analyticity in $\epsilon$ is independent of its modulus, and contains all $\epsilon\in\rmH$ such that $ - 3\pi/2 - \liftarg g < \arg \epsilon < 3\pi/2 - \liftarg g $. In particular, for all
$\lvert \epsilon \rvert \geq 0$, $\theta \in
(-\pi/2,\pi/2)$, and $\varphi \in (-\pi,\pi)$, $((\modulus{g},\varphi),\lvert \epsilon \rvert e^{\imath
  \theta})$ belongs to $\mathfrak{C}$ if $\lvert g\rvert$ is small
enough (see discussion in Remark~\ref{rk:domain} for how small ``small enough'' is) and for such $g$, $\mathfrak{C}$ includes a Sokal disk
in the $\epsilon$-plane (see Remark~\ref{remarksokaldisk}) of an
arbitrary, positive radius.
\end{corollary}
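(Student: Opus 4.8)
The plan is to read everything off the defining conditions \eqref{conds} of $\mathfrak{C}$; this is essentially interval arithmetic, the one delicate point being a uniform lower bound as $\arg\epsilon$ approaches $\pm\pi/2$. First I would observe that none of \eqref{conds1}--\eqref{conds3} involves $\modulus\epsilon$: they only constrain $\modulus g$, $\liftarg g$, $\theta=\arg\epsilon$ and the auxiliary angle $\psi$. Hence, for fixed $g$, whether $(g,\epsilon)\in\mathfrak{C}$ depends on $\epsilon$ only through $\theta$, so the $g$-fibre of $\mathfrak{C}$ is a union of open rays from the origin contained in $\rmH$; together with Theorem~\ref{THM1} this is exactly the assertion that the analyticity domain in $\epsilon$ is independent of $\modulus\epsilon$.

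Next I would pin down the admissible angular sector. Since $\epsilon\in\rmH$ forces $\theta\in(-\pi/2,\pi/2)$, the interval $(\theta-\pi/2,\theta+\pi/2)$ appearing in \eqref{conds3} lies inside $(-\pi,\pi)$, so the requirement $\psi\in(-\pi,\pi)$ from Theorem~\ref{THM1} is automatic; thus \eqref{conds2}--\eqref{conds3} just ask for $\psi$ in the overlap of $(-\pi-\liftarg g,\pi-\liftarg g)$ and $(\theta-\pi/2,\theta+\pi/2)$. Two open intervals overlap iff each left endpoint is strictly below the other right endpoint, which here reads precisely $-3\pi/2-\liftarg g<\theta<3\pi/2-\liftarg g$, the sector claimed in the statement. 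On the nonempty open set of $\psi$ meeting \eqref{conds2}--\eqref{conds3} one has $\liftarg g+\psi\in(-\pi,\pi)$, so $1+\cos(\liftarg g+\psi)>0$, and $\cos(\psi-\theta)>0$ by \eqref{conds3}; therefore the right-hand side of \eqref{conds1} is continuous and strictly positive there, with a strictly positive supremum $\rho(g,\theta)$ over those $\psi$, and \eqref{conds1} --- hence $(g,\epsilon)\in\mathfrak{C}$ --- holds as soon as $\modulus g<\rho(g,\theta)$. Since $\rho(g,\theta)\to0$ when $\theta$ tends to an endpoint of the sector, this is the content of the ``$\modulus g$ small enough'' proviso.

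For the Sokal disk, fix $g$ with $\liftarg g=\varphi\in(-\pi,\pi)$; then $\varphi+\theta\in(-3\pi/2,3\pi/2)$ for every $\theta\in(-\pi/2,\pi/2)$, so by the previous step every such $\theta$ is admissible for $g$ once $\modulus g$ is small. The arguments of the points of a Sokal disk $\{\modulus{\epsilon-r}<r\}$ fill all of $(-\pi/2,\pi/2)$, and $\mathfrak{C}$-membership over a fixed $g$ depends only on $\arg\epsilon$, so such a disk is contained in the $g$-fibre of $\mathfrak{C}$ as soon as $\modulus g<m(\varphi):=\inf_{\theta\in(-\pi/2,\pi/2)}\rho(g,\theta)$; the only real work is to show $m(\varphi)>0$. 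Here $\rho(g,\theta)$ equals the maximum of the continuous function $\psi\mapsto\tfrac14\bigl(1+\cos(\varphi+\psi)\bigr)\sqrt{\cos(\psi-\theta)}$ over the closure of the admissible $\psi$-interval (nothing is gained on the two extra boundary points, where one of the factors vanishes); this maximum is continuous in $\theta$ since the interval varies continuously, and I would check that it extends continuously with a positive value to the endpoints $\theta=\pm\pi/2$ by exhibiting there an explicit admissible $\psi$ --- depending on the sign of $\varphi$, e.g. $\psi=\pi/2$ or $\psi=(\pi-\varphi)/2$ at $\theta=\pi/2$ according as $\varphi\le0$ or $\varphi>0$, and symmetrically at $\theta=-\pi/2$ --- for which both factors are positive. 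Compactness of $[-\pi/2,\pi/2]$ then yields $m(\varphi)>0$, so for $\modulus g<m(\varphi)$ and any $r>0$ the whole disk $\{\modulus{\epsilon-r}<r\}$ lies in $\mathfrak{C}$; Theorem~\ref{THM1} furnishes analyticity of the cumulants there, and since $r$ is arbitrary this gives the stated Sokal disk of any positive radius. The single step requiring genuine care is this uniform positivity of $\rho(g,\cdot)$ up to $\theta=\pm\pi/2$; the rest is bookkeeping with intervals.
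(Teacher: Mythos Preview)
Your argument is correct and follows the same route as the paper: both extract the angular sector $-3\pi/2-\varphi<\theta<3\pi/2-\varphi$ by interval arithmetic on \eqref{conds2}--\eqref{conds3}, then observe that $\bigcap_{\varphi\in(-\pi,\pi)}(-3\pi/2-\varphi,3\pi/2-\varphi)=(-\pi/2,\pi/2)$ so that every $\theta$ in $\rmH$ is admissible when $\varphi\in(-\pi,\pi)$.

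The one genuine addition in your write-up is the compactness argument showing $m(\varphi)=\inf_{\theta\in(-\pi/2,\pi/2)}\rho(g,\theta)>0$. The paper's proof does not spell this out: it stops at the intersection identity and refers to Remark~\ref{rk:domain} for the quantitative content of ``small enough,'' where the uniform-in-$\theta$ radius $\rho(\varphi)$ is defined and visualized rather than proven positive. Your argument --- extending $\rho(g,\cdot)$ continuously to $[-\pi/2,\pi/2]$ by exhibiting an explicit admissible $\psi$ at the endpoints and then invoking compactness --- fills that gap cleanly. This is more rigorous than what the paper records in the proof proper, but it is exactly the content the paper intends, so the approaches coincide in spirit.
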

\begin{proof}
The two conditions on $\varphi$ and $\theta$ read:
\begin{equation*}
    \begin{cases}
\lvert \varphi+\psi \rvert < \pi \\ \lvert \psi -\theta \rvert < \frac{\pi}{2}
    \end{cases} \Leftrightarrow
     \begin{cases}
-\pi< \varphi+\psi < \pi \\ -\frac{\pi}{2}< \theta-\psi  < \frac{\pi}{2}
    \end{cases} \Rightarrow -\frac{3\pi}{2}< \varphi+\theta  <\frac{3\pi}{2}
     \Leftrightarrow  - \frac{3\pi}{2} -  \varphi < \theta < \frac{3\pi}{2} -  \varphi \; ,
\end{equation*}
and observing that $\bigcap_{\varphi\in(-\pi,\pi)} (- 3\pi/2 -  \varphi ,
3\pi/2 -  \varphi)=(-\pi/2,\pi/2)$ we conclude.
\end{proof}
\begin{theorem}[Main Theorem 2: Borel summability]\label{THM2}
For small $\alpha > 0$ we define the subdomain
$\mathfrak{C}_{\alpha}$ of $\mathfrak{C}$ made of all couples $(g,\epsilon)
\in \Sigma\times\rmH$
such that there exists $\psi \in (-\pi,\pi)$ for which the following inequalities hold:
\begin{subequations}%
\label{condsa}
\begin{align}
  \lvert  g \rvert & <\frac{1}{4} (1+\cos{(\liftarg g+\psi)}) \sqrt{\cos{(\psi-\arg \epsilon)}} \,(1-\alpha), \label{condsa1}
  \\ \lvert \liftarg g+\psi \rvert & < \pi\, (1-\alpha), \label{condsa2}\\  \lvert \psi-\arg \epsilon \rvert &< \frac{\pi}{2}\, (1-\alpha).\label{condsa3}
\end{align}
\end{subequations}%
For all $k \geq 1$, the rescaled cumulant of order $2k$ of the $\grpO(N)$-vector model
$\mathfrak{K}^{2k}(g,\epsilon)$ is Borel summable in $\epsilon$ along the positive real axis for $g$ inside a non trivial domain (see Remark~\ref{rk:domain}) and for $g$ inside this domain they can be computed as the Borel sum of their large $N$ expansion.
\begin{proof}
The proof of this theorem follows from Corollary~\ref{corol1}
and from the following lemma, proven in Section~\ref{sec5}:

\begin{lemma}\label{lemmaboundrest}
For small $\alpha>0$, for all $k\geq1$ and for all $(g,\epsilon)\in
\mathfrak{C}_\alpha$, there exists two constants $C_\alpha >0$ and
$K_\alpha >0$ independent of $g$ and $\epsilon$ but depending on $\alpha$ such that the Taylor rest term of order $q$ in the $\epsilon$ expansion of the
cumulant, denoted by $R^{2k}_q(g,\epsilon)$ (see eq.~\eqref{restterm} for a 
closed expression of this rest term), obeys the following bound for $q$ large enough: 
\begin{equation*}
\lvert R^{2k}_q(g,\epsilon) \rvert \lesssim_k C_\alpha K_\alpha^q
       {\lvert \epsilon \rvert}^{q} q!\;.
\end{equation*}
\end{lemma} 
This bound together with Corollary~\ref{corol1} prove that the cumulants
verify the hypotheses of the Nevanlinna Sokal theorem \cite{Sokal1980aa} uniformly in $g$ 
(for completeness we recall the relevant version of this theorem in Thm.~\ref{thm:Sokal}). 
\end{proof}

\end{theorem}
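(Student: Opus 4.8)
The plan is to deduce Borel summability at fixed $g$ by verifying, for the one-variable function $\epsilon\mapsto\mathfrak{K}^{2k}(g,\epsilon)$, the two hypotheses of the Nevanlinna–Sokal theorem (Thm.~\ref{thm:Sokal}): analyticity in a Sokal disk tangent to the imaginary axis at the origin (Remark~\ref{remarksokaldisk}), and a factorial bound on the Taylor remainders in $\epsilon$, both with constants uniform in $g$. The two ingredients are already isolated. Corollary~\ref{corol1} provides, for $g$ in a suitable domain, a Sokal disk in the $\epsilon$-plane together with analyticity of the series \eqref{cumul}, while Lemma~\ref{lemmaboundrest} provides the remainder bound $\lvert R^{2k}_q(g,\epsilon)\rvert \lesssim_k C_\alpha K_\alpha^q \lvert\epsilon\rvert^q q!$ with $C_\alpha,K_\alpha$ independent of $(g,\epsilon)$. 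The proof therefore reduces to gluing these together and checking that the two domains are compatible.

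First I would fix $\alpha>0$ small and identify the non-trivial domain in $g$ promised in the statement (cf.\ Remark~\ref{rk:domain}): it consists of those $g\in\Sigma$ for which a genuine Sokal disk in the $\epsilon$-plane can be inscribed inside the slice $\{\epsilon\tqs(g,\epsilon)\in\mathfrak{C}_\alpha\}$. Since none of the inequalities \eqref{condsa} constrains $\lvert\epsilon\rvert$, this slice is invariant under positive dilations of $\epsilon$, i.e.\ it is an angular sector in $\theta=\arg\epsilon$; choosing $\psi=\theta$ makes \eqref{condsa3} trivial and reduces \eqref{condsa1}–\eqref{condsa2} to $\lvert g\rvert<\tfrac14(1+\cos(\varphi+\theta))(1-\alpha)$ and $\lvert\varphi+\theta\rvert<\pi(1-\alpha)$, so that for $\lvert g\rvert$ small enough (e.g.\ $\lvert g\rvert<\tfrac14(1-\alpha)$ when $\varphi$ is near $0$) the admissible sector covers the full angular range $(-\pi/2,\pi/2)$ of $\rmH$. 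As any Sokal disk lies in the open right half-plane with arguments in $(-\pi/2,\pi/2)$, it is then contained in the slice; this is the quantitative version of Corollary~\ref{corol1} transported from $\mathfrak{C}$ to the shrunk domain $\mathfrak{C}_\alpha$. Although analyticity is obtained at each point through a representation whose tilt $\psi$ may vary with $(g,\epsilon)$, the contour-independence $Z_\psi=Z$ established after \eqref{eq-Ztiltcontour} forces the series \eqref{cumul} to coincide on overlaps, so the various $\psi$-representations glue into a single analytic function on the disk, whose $\epsilon$-Taylor expansion and remainder $R^{2k}_q$ of eq.~\eqref{restterm} are unambiguous and whose coefficients are by construction the terms of the large-$N$ expansion.

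Next I would feed the inscribed Sokal disk and Lemma~\ref{lemmaboundrest} into Thm.~\ref{thm:Sokal}. The bound $\lvert R^{2k}_q\rvert\lesssim_k C_\alpha K_\alpha^q q!\,\lvert\epsilon\rvert^q$ is exactly of the Nevanlinna–Sokal form with geometric rate $K_\alpha$, and both the disk and these constants are uniform over the non-trivial $g$-domain. The theorem then yields that the Borel transform $\sum_q a_q(g)\,t^q/q!$ has radius of convergence at least $1/K_\alpha$, extends analytically to a neighbourhood of $\R_+$, and reconstructs $\mathfrak{K}^{2k}(g,\epsilon)$ as its Laplace–Borel integral; in particular $\mathfrak{K}^{2k}(g,\cdot)$ is Borel summable along the positive real axis and equals the Borel sum of its large-$N$ expansion, uniformly in $g$ on the identified domain.

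I expect the main obstacle to be the compatibility check of the second paragraph. A Sokal disk necessarily contains points approaching the imaginary axis near the tangency point, where $\arg\epsilon\to\pm\pi/2$ sits at the very edge of the window allowed by \eqref{condsa3}; one must verify that for every such boundary-approaching $\epsilon$ a valid $\psi=\psi(g,\epsilon)\in(-\pi,\pi)$ satisfying all three inequalities \eqref{condsa} still exists, and simultaneously that the pointwise-defined cumulant is genuinely analytic—not merely defined—across the whole disk, so that the remainder bound of Lemma~\ref{lemmaboundrest} truly holds on all of it. Once the disk is shown to sit inside $\mathfrak{C}_\alpha$ with $g$-uniform constants, the remainder is pure bookkeeping in the application of Thm.~\ref{thm:Sokal}.
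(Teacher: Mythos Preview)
Your proposal is correct and follows essentially the same route as the paper: combine the Sokal-disk analyticity furnished by Corollary~\ref{corol1} with the uniform remainder bound of Lemma~\ref{lemmaboundrest}, and feed both into the Nevanlinna--Sokal theorem. The extra care you take in checking that a Sokal disk actually fits inside the $\epsilon$-slice of $\mathfrak{C}_\alpha$ (rather than only $\mathfrak{C}$), and in justifying the gluing of the various $\psi$-representations into a single analytic function, are details the paper leaves implicit but which your write-up makes explicit.
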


\begin{remark}\label{rk:domain}
We now wish to visualize the domain $\mathfrak{C} \subset \Sigma\times
\rmH $ (or $\mathfrak{C}_{\alpha} $ for $\alpha\to 0$). Let us go to the $\C$-plane of $\Pi g$ 
and look for the curve $\rho(\varphi)$ defined by:
$$ \varphi \mapsto \rho(\varphi):= \sup \{ \, |g| : \text{there is a
  $\psi=\psi(\theta)$ such that $|g|e^{\imath\varphi}\in \mathrm{pr}_1
  \mathfrak{C}^{\psi(\theta)}$ for all $ \theta \in (-\pi/2,\pi/2 )$} \, \} .$$ 
    Here, $\mathfrak C^\psi$ consists of
    the points $(g,\epsilon)$ verifying eqs. \eqref{conds}
    for a given $\psi$, and pr$_1$  is the projection
to the first $\C$-factor (or the $g$-plane), so that, in particular, the conditions:
$$ 
    \lvert \varphi+\psi(\theta) \rvert < \pi,\quad \lvert
    \psi(\theta)-\theta \rvert < \frac{\pi}{2},\quad \text{ and }\quad\lvert g
    \rvert <\frac{1}{4} \big(1+\cos{[\varphi+\psi(\theta) ]}\big)
    \sqrt{\cos{(\theta-\psi(\theta) )}} \;,
$$ 
must hold. The visualization of this curve is easier for a linear choice
    $\psi_\xi(\theta)=  \xi \cdot \theta$, where $0<\xi < 1$ is a new
    parameter. Denoting by $\rho_\xi(\varphi)$ the curve for this particular
    choice of $\psi=\psi_\xi$, namely:
\begin{align}
     \rho_\xi(\varphi) :=
\sup \{\, |g| :  \,\,  |g|e^{\imath\varphi}\in \mathrm{pr}_1 \mathfrak{C}^{\psi_\xi(\theta)}
\text{ for all }  \theta \in (- \pi/2,\pi/2 )  \} \;, \label{linearpsi}
\end{align}
this curve can be visualized (see Fig. \ref{fig:Cpsinus}).
\begin{figure}[ht]
\centering
 \includegraphics[width=.35\linewidth]{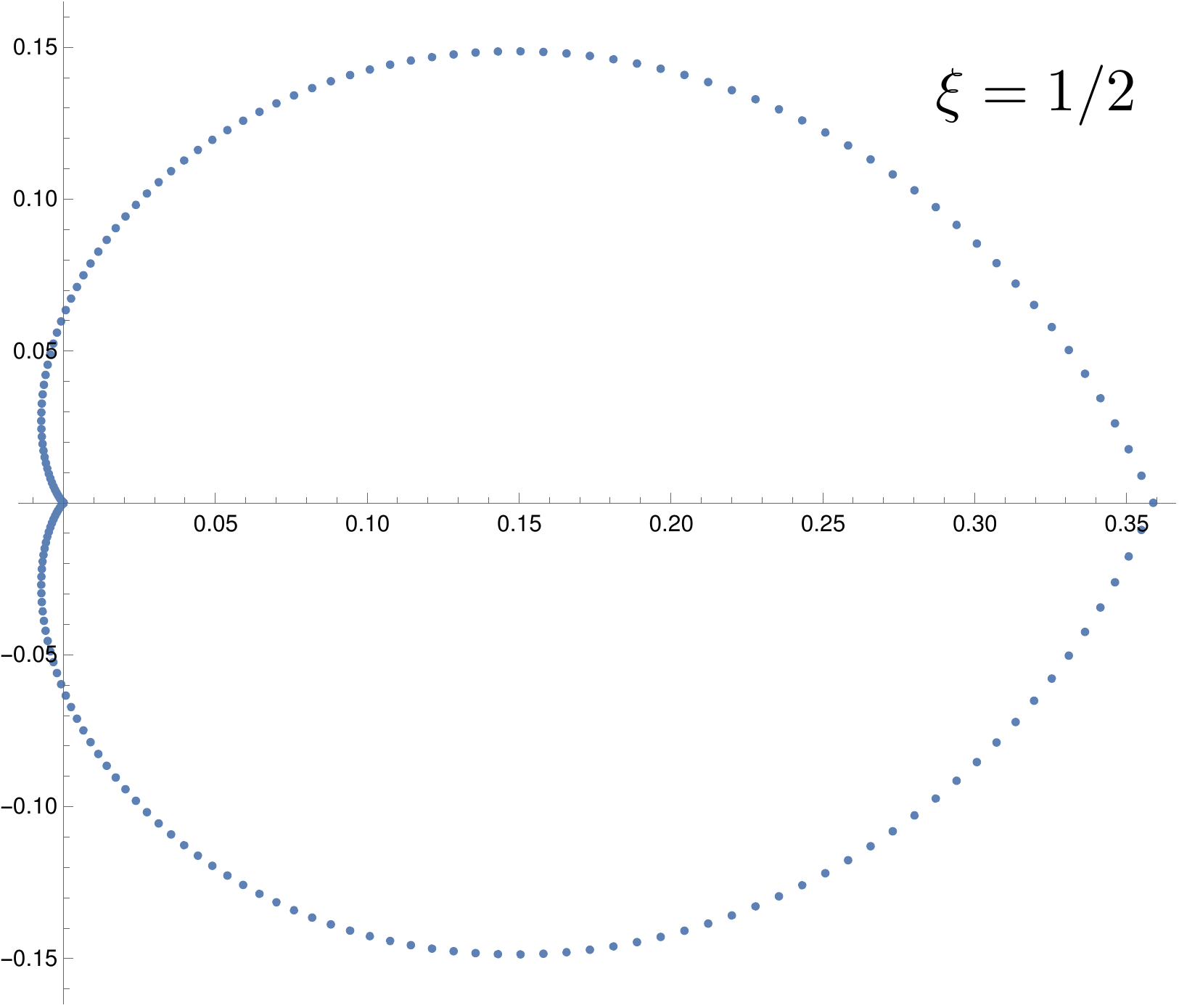}
 \raisebox{-2.99ex}{\includegraphics[width=.35\linewidth]{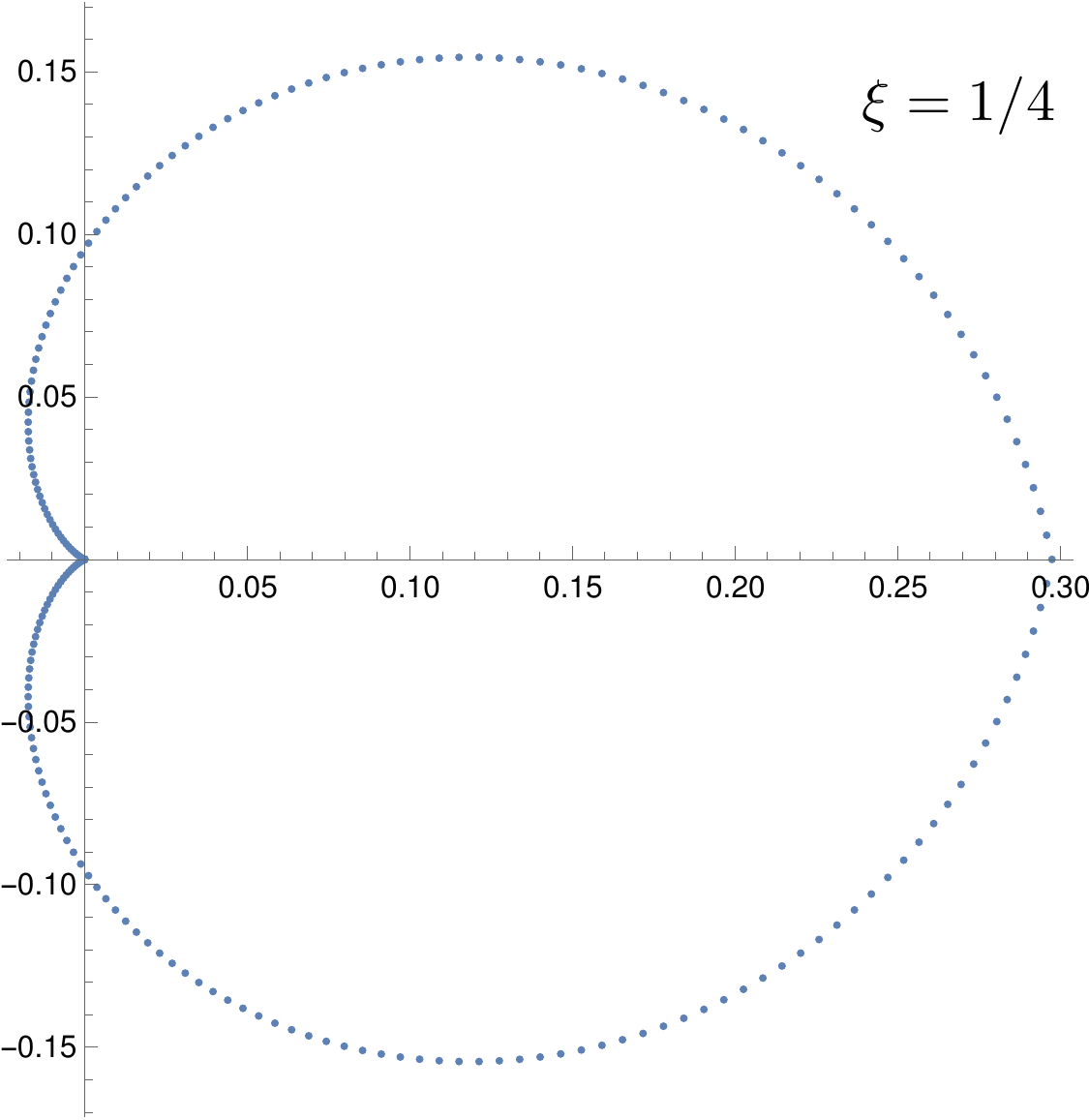}}\\[2ex]
       \includegraphics[width=.35\linewidth]{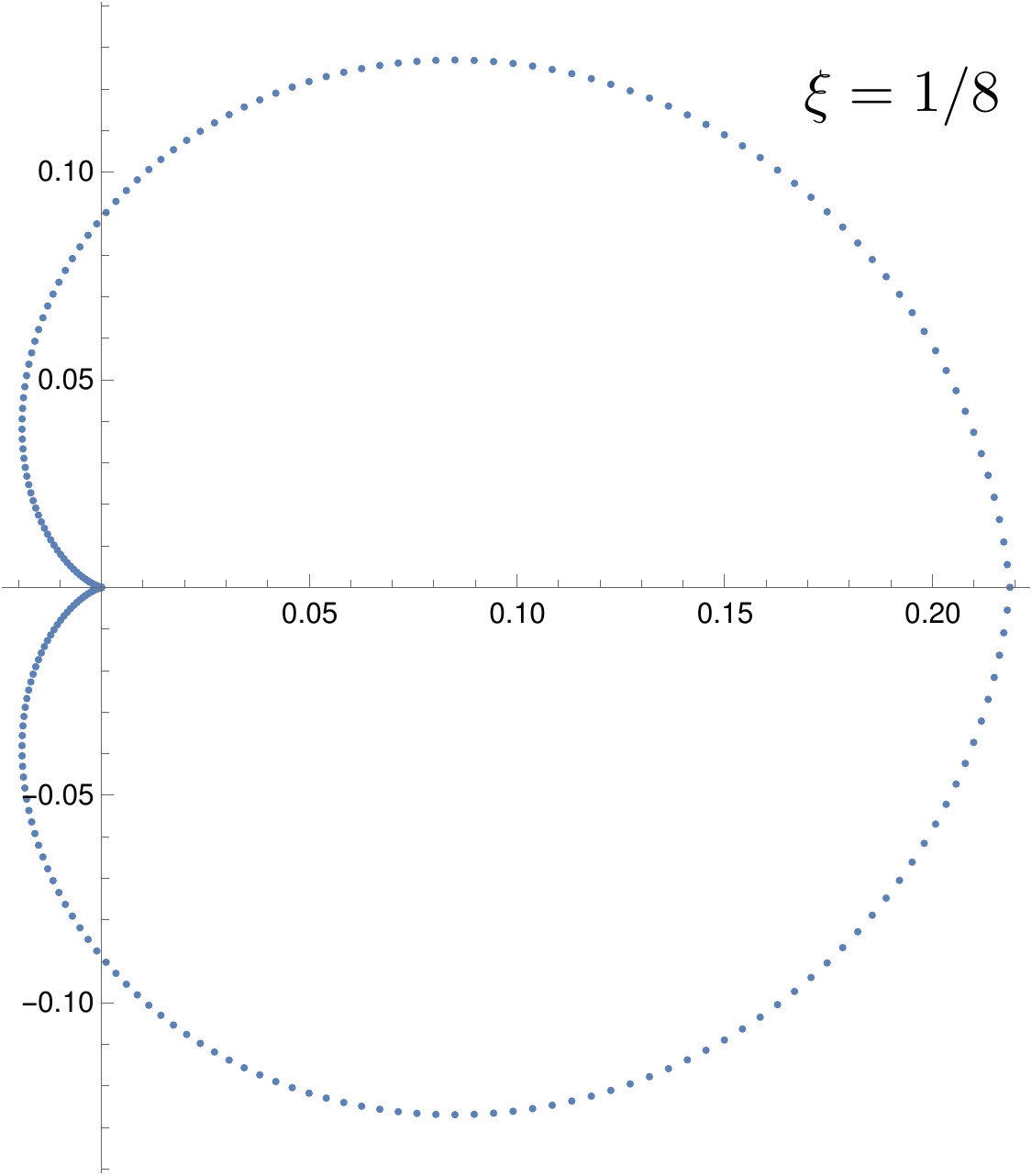}  \quad
\raisebox{4.9ex}{\includegraphics[width=.35\linewidth]{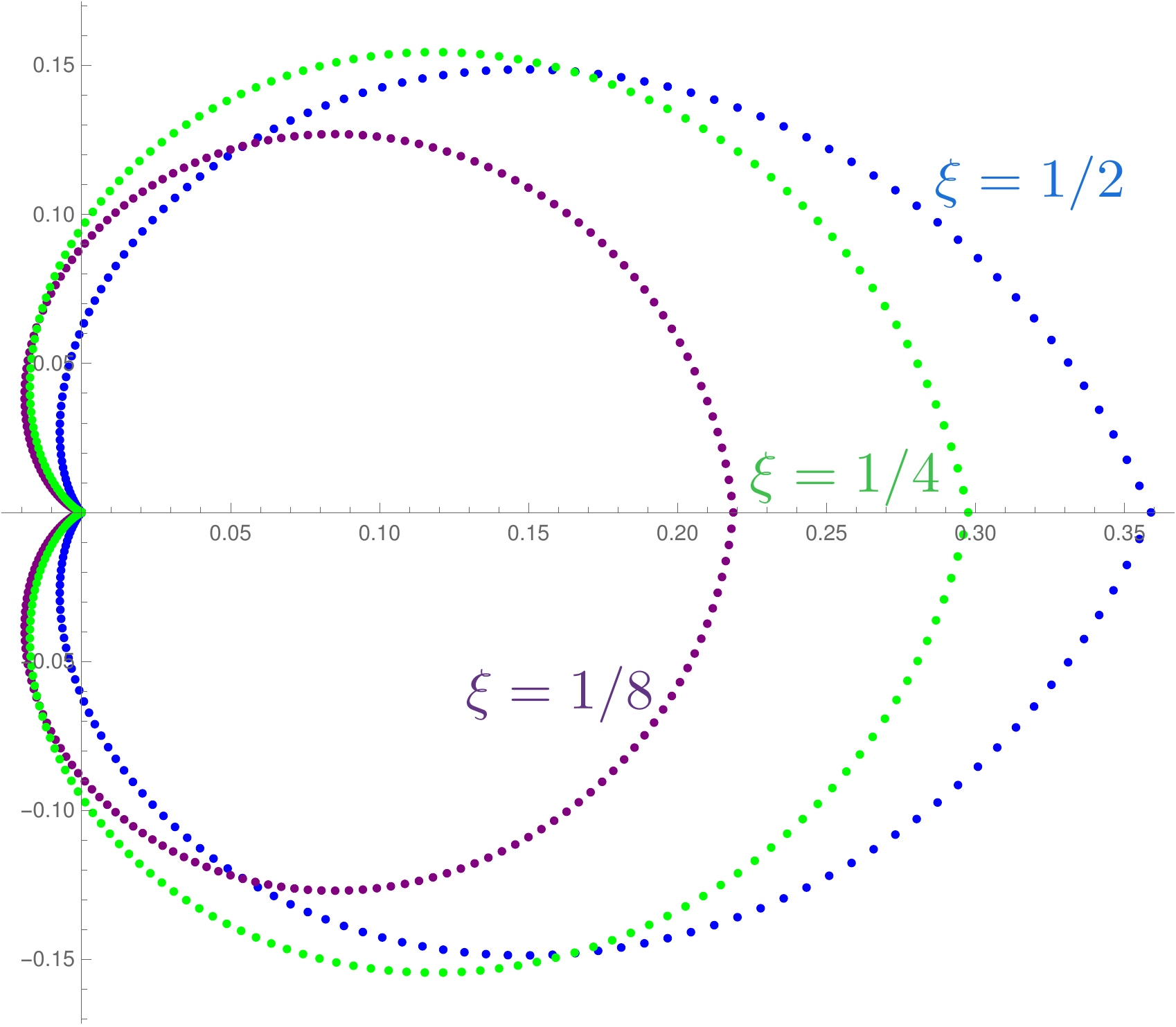}}
\vspace{.3cm}
\caption{In the first three panels, we show the (discretized) curves
  $\rho_\xi(\varphi)$ given by eq. \eqref{linearpsi} for the values of
  $\xi=1/2,1/4,1/8$. The last panel shows the superposed domains.\label{fig:Cpsinus}}
\end{figure}
\end{remark}

\section{The bounds and the domain of analyticity}\label{sec4}

This section is dedicated to:
\begin{proof}[Proof of Theorem~\ref{THM1}]
From now on we fix $\psi \in (\theta- \pi/ 2,\theta + \pi/
2)$ and bound $\mathfrak{K}_{\psi}^{2k}(g,\epsilon)$ (as
expressed in eq.~\eqref{cumul}) thanks to eq.~\eqref{resbound} and the following lemma, 
proven in Appendix~\ref{app:compbound}:
\begin{lemma}\label{complexgaussianboundmain}
Let $n$ be a positive integer, $z\in\C$ with $\Re z>0$, $C \in
M_n(\R)$ symmetric positive matrix and $F \in
L^1(\R^n,\mu_{{\textfrac{\lvert z\rvert^2 C}{\Re z}}})$ a $\C$-valued function. Then:
\begin{equation}\label{GaussianComplex}
 \left| \int d\mu_{zC}(X)F(X) \right| = \left| \left[ e^{ \frac{z}{2} \langle \partial,\partial \rangle_{C}}
    F(X) \right]_{X=0}
 \right| \le \frac{1}{ \cos^{n/2}(\arg z)  }  \sup_{X \in \mathbb{R}^n} | F(X) |\,.
\end{equation}
\end{lemma}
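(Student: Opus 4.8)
The plan is to reduce the statement to an elementary absolute‑value bound on an ordinary integral over $\R^n$. Throughout, the left‑hand side is understood, as elsewhere in the paper, via the operator representation; for $z\in\C$ with $\Re z>0$ and $C$ symmetric positive definite, analytic continuation in $z$ of the real heat kernel (cf.\ Appendix~\ref{gaussexp}) gives
\[
\int d\mu_{zC}(X)\,F(X)&=\Big[e^{\frac z2\langle\partial,\partial\rangle_C}F(X)\Big]_{X=0}\\&=\frac{1}{(2\pi)^{n/2}\sqrt{\det(zC)}}\int_{\R^n}e^{-\frac1{2z}\langle X,C^{-1}X\rangle}F(X)\,dX,
\]
the integral converging absolutely precisely because $F\in L^1(\R^n,\mu_{|z|^2C/\Re z})$ and $\Re(1/z)=\Re z/|z|^2>0$. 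Performing the linear bijection $X=C^{1/2}Y$ of $\R^n$ — which leaves $\sup_X|F(X)|$ unchanged, turns $\langle X,C^{-1}X\rangle$ into $\|Y\|^2$, and absorbs $\sqrt{\det C}$ into the Jacobian — reduces the claim to the case $C=I_n$, where
\[
\int d\mu_{zC}(X)\,F(X)=\frac{1}{(2\pi z)^{n/2}}\int_{\R^n}e^{-\frac1{2z}\|Y\|^2}F(C^{1/2}Y)\,dY.
\]

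I then run the core estimate. Using $|(2\pi z)^{n/2}|=(2\pi|z|)^{n/2}$ for $z\in\C\setminus\R_-$, the identity $\bigl|e^{-\|Y\|^2/(2z)}\bigr|=e^{-\frac{\Re z}{2|z|^2}\|Y\|^2}$, and $|F(C^{1/2}Y)|\le\sup_{X\in\R^n}|F(X)|$, one gets
\[
\Big|\int d\mu_{zC}(X)\,F(X)\Big|&\le\frac{\sup_X|F(X)|}{(2\pi|z|)^{n/2}}\int_{\R^n}e^{-\frac{\Re z}{2|z|^2}\|Y\|^2}\,dY\\&=\Big(\frac{|z|}{\Re z}\Big)^{n/2}\sup_X|F(X)|=\cos^{-n/2}(\arg z)\,\sup_X|F(X)|,
\]
where in the last step I used $\Re z/|z|=\cos(\arg z)$, legitimate since $\Re z>0$ forces $|\arg z|<\pi/2$. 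This is exactly the announced bound.

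Finally, to allow $C$ merely positive semidefinite of rank $r<n$, I would note that $\mu_{zC}$ is then supported on the range of $C$ and reduce to the nondegenerate case in dimension $r$, which yields the stronger estimate $\cos^{-r/2}(\arg z)\sup_X|F(X)|$ and hence the claim since $\cos^{-1}(\arg z)\ge1$; equivalently, replace $C$ by $C+\delta I_n$ with $\delta>0$, apply the above uniformly in $\delta$, and let $\delta\to0$ by dominated convergence in the integral representation. The computation is entirely routine; the only points needing a little care — and thus the main, mild, obstacle — are making the complex Gaussian integral well defined and consistently fixing the principal branch of $(2\pi z)^{n/2}$ on $\C\setminus\R_-$, which is what the auxiliary material in Appendix~\ref{gaussexp} handles. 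The equality with $\int d\mu_{zC}(X)F(X)$ in the statement is then just the definition of that symbol for complex $z$.
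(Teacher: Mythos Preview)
Your proof is correct and follows essentially the same route as the paper's: both decompose the complex Gaussian kernel $e^{-\frac{1}{2z}\langle X,C^{-1}X\rangle}$ into a real Gaussian of covariance $\tfrac{|z|^2}{\Re z}C$ times a unimodular phase, bound the phase by $1$, and read off the factor $\cos^{-n/2}(\arg z)$ from the mismatch of normalizations. The paper's version is shorter only because its \emph{definition} of $\mathbb{E}_{zC}$ (Appendix~\ref{app:compbound}) already packages that decomposition and the $C_\varepsilon\to C$ regularization, whereas you redo the computation from the explicit integral and treat the degenerate case by the same $C+\delta I_n$ limit at the end; the content is identical.
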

We apply this lemma with $z=\epsilon e^{-\imath \psi}$ and
$C=W^{T}(u)$. Then, we bound the integration over the
$u_{T}$ parameters by one. Finally, we also have to
notice that $\frac{1}{\cos{(\frac{\varphi+\psi}{2})}}$ appears at the
power $\sum_{i=1}^n d_{i}(T_{\mathfrak{c}}) = k + 2(n-1)$. Thus,
\begin{multline*}
\lvert \mathfrak{K}_{\psi}^{2k}(g,\epsilon) \rvert \lesssim_k 
\frac{1}{\sqrt{\cos{(\psi-\theta)}} }  \frac{1}{\cos^k{(\frac{\varphi
      +\psi}{2})}} \\
      \times \sum_{n \geq k} \frac{1}{n!}{\left(\frac{\lvert g
    \rvert}{2 \cos^2{(\frac{\varphi+\psi}{2})}
    \sqrt{\cos{(\psi-\theta)}} }\right)}^{n-1}
\sum_{T_{\mathfrak{c}}\in \mathcal{T}_n\times C^n_k} \prod_{i=1}^n (d_{i}(T_{\mathfrak{c}})-1)! \; ,
\end{multline*}
We conclude using combinatorial arguments, that are gathered
in the following lemma:
\begin{lemma}\label{lemmacombina1}
For all $n \geq k$, the sum over $k$-ciliated (i.e. $\mathfrak c \in C_k^n$) trees with $n$-vertices verifies
\begin{equation*}
    \frac{1}{n!}
\sum_{T_{\mathfrak{c}}\in \mathcal{T}_n\times C^n_k} 
 \prod_{i=1}^n ( d_i(T_{\mathfrak{c}}) -1
)!= \binom{2n-1}{n-k}\binom{2n+k-3}{2n-1}\times(k-2)!\;.
\end{equation*}
\end{lemma}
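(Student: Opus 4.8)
```latex
\textbf{Proof proposal for Lemma~\ref{lemmacombina1}.}
The plan is to compute the weighted sum
$S_{n,k}\defi\frac{1}{n!}\sum_{T_{\mathfrak c}\in\mathcal T_n\times C^n_k}\prod_{i=1}^n(d_i(T_{\mathfrak c})-1)!$
by separating the choice of the underlying tree $T$ from the choice of the cilium configuration $\mathfrak c\in C^n_k$, and then encoding the weight $\prod_i(d_i(T_{\mathfrak c})-1)!$ combinatorially. Recall $d_i(T_{\mathfrak c})=d_i(T)+c(i)$ with $c(i)=\mathbf 1_{i\in\mathfrak c}$, so the degree sequence that actually enters is that of $T$ shifted by $1$ on the $k$ ciliated vertices. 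The first step is therefore to fix a degree sequence $(d_1,\dots,d_n)$ of the tree (with $\sum_i d_i=2(n-1)$, each $d_i\ge 1$) and recall the classical refinement of Cayley's formula: the number of labelled trees on $n$ vertices with prescribed degrees $d_i$ is the multinomial coefficient $\binom{n-2}{d_1-1,\dots,d_n-1}$. Summing $\prod_i(d_i-1)!$ against this count is exactly what makes the product of factorials collapse, since $\binom{n-2}{d_1-1,\dots,d_n-1}\prod_i(d_i-1)!=(n-2)!$ whenever $\sum_i(d_i-1)=n-2$.

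Next I would handle the cilia. Since a cilium on vertex $i$ raises $d_i$ to $d_i+1$, and we must have $d_i(T_{\mathfrak c})-1\ge 1$ hence $d_i(T)\ge 1$ always (automatic for a tree) but also need to track the $(k-2)!$ factor, it is cleanest to treat the ciliated vertices as carrying one extra half-edge. Concretely, I would reorganize the sum as a sum over the set $\mathfrak c=\{i_1<\dots<i_k\}$ of ciliated vertices (there are $\binom nk$ choices, but cilia are ordered tuples in $C^n_k$, giving a further $k!$), then over the degree sequence of $T$, and for each vertex in $\mathfrak c$ the effective exponent of the factorial is $d_i(T)$, while for the others it is $d_i(T)-1$. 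Introducing $e_i=d_i(T)-1\ge 0$ for non-ciliated $i$ and $e_i=d_i(T)\ge 1$ for ciliated $i$, the constraint $\sum_i d_i(T)=2(n-1)$ becomes $\sum_i e_i = 2(n-1)-(n-k)=n+k-2$, and the weight is $\prod_i e_i!$ times the shifted-degree tree count. Then I would either (a) recognize the resulting sum $\sum \binom{n-2}{d_1-1,\dots,d_n-1}\prod_{i\in\mathfrak c}d_i!\prod_{i\notin\mathfrak c}(d_i-1)!$ as a coefficient extraction in a generating function — the exponential generating function of a single rooted/ciliated vertex being $\sum_{d\ge1}\frac{(d-1)!}{(d-1)!}x^{d-1}=\frac{1}{1-x}$ for plain vertices and $\sum_{d\ge1}\frac{d!}{(d-1)!}x^{d-1}=\frac{1}{(1-x)^2}$ for ciliated ones — so that $n!\,S_{n,k}$ is, up to the Prüfer/forest normalization, the coefficient of $x^{2(n-1)-(n-1)}=x^{n-1}$ in $(1-x)^{-n}(1-x)^{-k}\cdot(\text{something})$; or (b) match directly against the known ``tree with a marked polynomial weight'' formulas. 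The target $\binom{2n-1}{n-k}\binom{2n+k-3}{2n-1}(k-2)!$ strongly suggests the generating-function route with a $(1-x)^{-2}$ per cilium, since products of binomials of this shape are exactly what extracting $[x^{n-1}](1-x)^{-(2n+k-2)}$-type coefficients produces after reindexing; the $(k-2)!$ almost certainly comes from the number of ways to linearly order the $k$ cilia along a path-like structure minus the two ``endpoint'' redundancies, i.e.\ from a $k!/\binom k2$-type simplification, and I would pin it down by checking the base cases $k=1$ and $k=2$ explicitly (where the formula must reduce to the known ciliated-tree counts, e.g.\ $\sum_{T\in\mathcal T_n}\prod_i(d_i(T_\star)-1)!$ with one cilium).

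Concretely the step order is: (1) state and cite the degree-refined Cayley formula $\#\{T\in\mathcal T_n:\deg_i=d_i\}=\binom{n-2}{d_1-1,\dots,d_n-1}$; (2) split $\sum_{T_{\mathfrak c}}=\sum_{\mathfrak c\in C^n_k}\sum_{T\in\mathcal T_n}$ and write the weight using $d_i(T_{\mathfrak c})=d_i(T)+\mathbf1_{i\in\mathfrak c}$; (3) for fixed $\mathfrak c$, convert $\sum_{T\in\mathcal T_n}\prod_i(d_i(T_{\mathfrak c})-1)!$ into a coefficient $[x^{n-2}]$ of the product of per-vertex generating functions $\prod_{i\notin\mathfrak c}(1-x)^{-1}\prod_{i\in\mathfrak c}(1-x)^{-2}=(1-x)^{-(n+k)}$, carrying along the $(n-2)!$ from the multinomial; (4) extract the coefficient, $[x^{n-2}](1-x)^{-(n+k)}=\binom{2n+k-3}{n-2}$, and multiply by $(n-2)!$; (5) multiply by $|C^n_k|=\frac{n!}{(n-k)!}$ and divide by $n!$; (6) simplify the resulting product of factorials and binomials, $\frac{(n-2)!}{(n-k)!}\binom{2n+k-3}{n-2}$, into the claimed form $\binom{2n-1}{n-k}\binom{2n+k-3}{2n-1}(k-2)!$ by straightforward factorial algebra, double-checking against small $n,k$. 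The main obstacle I anticipate is step (3): making the per-vertex generating-function decomposition rigorous requires being careful that the naive product $(1-x)^{-(n+k)}$ counts ordered degree sequences with the right constraint and that the $x$-degree bookkeeping ($\sum(d_i-1)=n-2$ for the tree, shifted correctly by the cilia) lands exactly on $[x^{n-2}]$ — an off-by-one here propagates into a wrong binomial — and a secondary subtlety is confirming that the ordered-vs-unordered treatment of $\mathfrak c$ is what produces precisely $(k-2)!$ rather than $k!$ or $(k-1)!$, which is why verifying the $k=1,2$ cases (where $(k-2)!$ would formally need the convention $0!^{-1}=1$, so the formula should be read as vanishing or reinterpreted there — worth a remark) is essential to get the constant right.
```
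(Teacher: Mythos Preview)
Your concrete plan (steps (1)--(6)) is correct and is essentially the paper's own proof: the paper also uses the degree-refined Cayley formula to reduce to $\frac{(n-2)!}{n!}\sum_{d_i}\sum_{\mathfrak c}\prod_{i\in\mathfrak c}d_i$, computes the degree sum via the generating function $[x^{2(n-1)}]\,x^n(1-x)^{-(n+k)}=\binom{2n+k-3}{n-2}$ (equivalently your $[x^{n-2}](1-x)^{-(n+k)}$), multiplies by $|C^n_k|=n!/(n-k)!$, and simplifies. Your worry about the origin of the $(k-2)!$ is unnecessary --- it is not a separate combinatorial input but simply the result of the factorial rearrangement $\frac{(n-2)!}{(n-k)!}\binom{2n+k-3}{n-2}=\binom{2n-1}{n-k}\binom{2n+k-3}{2n-1}(k-2)!$ in step (6).
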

\begin{proof}
\begin{align}
\frac{1}{n!}
\sum_{T_{\mathfrak{c}}\in \mathcal{T}_n\times C^n_k}  \prod_{i=1}^n
(d_i(T_{ \mathfrak{c}}) -1)!&=\frac{1}{n!}\sum_{T\in\mathcal{T}_n}\prod_{i=1}^n(d_i(T)-1)!\sum_{\mathfrak{c}\in C^n_k}\prod_{i=1}^n\frac{(d_i(T)+\mathbf{1}_{i\in\mathfrak{c}}-1)!}{(d_i(T)-1)!}\nonumber\\&=\frac{1}{n!}\sum_{T\in\mathcal{T}_n}\prod_{i=1}^n(d_i(T)-1)!\sum_{\mathfrak{c}\in C^n_k}\prod_{i\in\mathfrak{c}}d_i(T)\nonumber\,.
\end{align}
Here, to count the number of trees, we use Cayley's theorem that states that:
\begin{align}\label{Cayley}
    \sum_{T\in\mathcal{T}_n}\prod_{i=1}^n(d_i(T)-1)!=(n-2)!\sum^n_{\substack{d_1,\dotsc,d_n=1
    \\ \sum_i d_i = 2(n-1)}}1\,,
\end{align}
which yields
\begin{equation*}
\frac{1}{n!}
\sum_{T_{\mathfrak{c}}\in \mathcal{T}_n\times C^n_k}  \prod_{i=1}^n
(d_i(T_{ \mathfrak{c}}) -1)!=\frac{(n-2)!}{n!}\sum^n_{\substack{d_1,\dotsc,d_n=1
    \\ \sum_i d_i = 2(n-1)}}\sum_{\mathfrak{c}\in C^n_k}\prod_{i\in\mathfrak{c}}d_i\,.
\end{equation*}
The sum over the $d_i$'s can be computed by the following trick. Let us consider the function $f$ of $n$ variables:
\begin{equation*}
f(x_1,\dotsc,x_n) = \sum_{d_1,\dotsc,d_n=1}^\infty\prod_{i=1}^n x_i^{d_i} = \prod_{i=1}^n\frac {x_i}{1-x_i} \;.
\end{equation*}
Applying the following differential operator to $f$, and evaluating it
at $(x,\dotsc,x)$ gives the expression of the sum as a Taylor coefficient:
\begin{align*}
\sum_{\substack{d_1,\dotsc,d_n=1\\\sum_i d_i = 2(n-1)}}^\infty
\prod_{i\in\mathfrak{c}} d_{i}&=[x^{2(n-1)}] \big(\prod_{i\in\mathfrak{c}}
x_{i}\frac{\partial}{\partial
  x_{i}}\big)f(x,\dotsc,x)\\ &=[x^{2(n-1)}]\frac{x^{n}}{(1-x)^{n+k}}=\binom{2n+k-3}{n-2}.
\end{align*}
With this result at hand, and using $\sum_{\mathfrak{c}\in C^n_k} 1
=n!/(n-k)!$, we finally obtain that
\begin{align*}
\frac{1}{n!}
\sum_{T_{\mathfrak{c}}\in \mathcal{T}_n\times C^n_k}  \prod_{i=1}^n
(d_i(T_{ \mathfrak{c}}) -1)!&=\frac{(n-2)!}{n!}\times\frac{n!}{(n-k)!}\binom{2n+k-3}{n-2}=\binom{2n-1}{n-k}\binom{2n+k-3}{2n-1}\times(k-2)!\,.
\end{align*}
\end{proof}

Combining this with the trigonometric identity
$2\cos^2{(x/2)}=(1+\cos x)$, we obtain the following bound on the cumulants:
\begin{multline*}
\lvert \mathfrak{K}_{\psi}^{2k}(g,\epsilon) \rvert \lesssim_k 
\frac{(k-2)!}{\sqrt{\cos{(\psi-\theta)}} } {
  \frac{1}{\cos^{k}{(\frac{\varphi+\psi}{2})}} } \\
  \times \sum_{n \geq k}
\binom{2n-1}{n-k}\binom{2n+k-3}{2n-1}{\left(\frac{\lvert g \rvert}{
    (1+\cos{(\varphi+\psi})) \sqrt{\cos{(\psi-\theta)} }
  }\right)}^{n-1}.
\end{multline*}
By Stirling's formula, $\binom{2n-1}{n-k}\lesssim_k 4^n$ and
$\binom{2n+k-3}{2n-1}(k-2)!=\frac{(2n+k-3)!}{(2n-1)!}\lesssim_k n^{k-2}$ so that we can finally bound the cumulants by:
\begin{equation*}
\lvert \mathfrak{K}_{\psi}^{2k}(g,\epsilon) \rvert \lesssim_k
\frac{1}{\sqrt{\cos{(\psi-\theta)}} } {
  \frac{1}{\cos^{k}{(\frac{\varphi+\psi}{2})}} } \sum_{n \geq k}
     {\left(\frac{4\lvert g \rvert}{ (1+\cos{(\varphi+\psi}))
         \sqrt{\cos{(\psi-\theta)} } }\right)}^{n-1} n^{k-2} \;.
\end{equation*}

We conclude on the analyticity of the cumulants thanks to two classical
theorems: the first one states that the integral of a function that
depends analytically on a paramater defines an analytic function as long as it converges;
the second one states that a uniformly convergent series of analytic
functions is analytic. 

The cumulants are expressed as a uniformly convergent series of analytic functions
both in $\Pi g$ and $\epsilon$. Since the theorems above apply in the bivariate
analytic case $\mathfrak{K}_{\psi}^{2k}(g,\epsilon)$ is analytic on the domain of
$\Sigma\times\rmH$ where the conditions \eqref{conds} hold and analytically continues $\mathfrak{K}^{2k}(g,\epsilon)$. Taking the union of these
domains for $\psi \in (\theta- \pi/ 2,\theta +\pi /2)$ yields
an analytic continuation of $\mathfrak{K}^{2k}(g,\epsilon)$ to the
subdomain $\mathfrak{C}$ of $\Sigma\times\rmH$ which concludes the proof.
\end{proof}

\begin{remark} For $g$ such that $\Pi g\in\R_-$, Borel summability in $\epsilon$ is lost since for $\varphi=\pm\pi$,
  the cardioid~\eqref{conds1} shrinks to zero when $\theta \rightarrow
  \pm \pi/2$. However, the domain of analyticity we found passes beyond the negative real axis and continues on the Riemann sheet. At the negative real axis the cumulants converge for 
  $|\Pi g|\le\frac{1}{6\sqrt{3}}$ which is of order 1. Of course, the cumulants are discontinuous here: the analytic continuations coming from above and from below the negative real axis do not coincide. 
  
  The discontinuity of the partition function and its logarithm are well understood as non perturbative instanton contributions: in zero dimensions and for $N=1$ this is detailed for instance in \cite{Aniceto2019jn}. On the contrary, the discontinuities of the cumulants have so far been less well studied.
\end{remark}

\begin{proof} It is possible to make use of $\psi$ in order to reach the negative
real axis for $g$. Indeed, assuming $\epsilon$ real positive, so that
$\theta =0$, we let $z_\psi(\varphi)=\frac 12\cos^2\del[1]{\frac{\varphi+\psi}2}\sqrt{\cos\psi}\,e^{i\varphi}$
be a point on the boundary of the
cardioid $\{|g|<\frac
12\cos^2(\frac{\varphi+\psi}2)\sqrt{\cos\psi}\}$ The maximal value of $|z_{\psi}(\pm\pi)|$ is attained for 
$\psi_0=2\arcsin\big(\frac 1{\sqrt 3}\big)$ and is $\frac 1{6\sqrt 3}$. 
\end{proof}

\section{Borel summability of the cumulants in $\mathbf{1/N}$}\label{sec5}
This last section is devoted to:

\begin{proof*}{Proof of Lemma~\ref{lemmaboundrest}}
The Borel summability of the cumulants stems from the
analyticity in a Sokal disk as stated in Theorem~\ref{THM1} 
and an estimation of the Taylor remainder. 
As we aim to obtain Borel summability in $\epsilon$ uniformly in $g$, we need to show that at large 
$q$ the remainder of order $q$ is bounded from above by 
$C\,K^q\,\lvert \epsilon \rvert^q \, q!$ with $C$ and $K$
\textit{independent of} $ g$. 

In order to compute the Taylor reminder of order $q$ we start from the expansion of
$\mathfrak{K}_{\psi}^{2k}(g,\epsilon)$ in eq.~\eqref{cumul}. We fix some $\psi \in (\theta-\pi/
2,\theta + \pi /2)$. Then, for all $k\geq 1$ and
$(g,\epsilon)\in \mathfrak{C}$ such that $\lvert \varphi+ \psi
\rvert < \pi $, the cumulants read (recall the \eqref{T_c} convention):
\begin{multline*}
\mathfrak{K}_{\psi}^{2k}(g,\epsilon) = 2^{k-1} \sum_{n \geq k}
\frac{1}{n!} {\left(\frac{- \Pi g}{2}\right)}^{n-1}
\sum_{T_{\mathfrak{c}}\in \mathcal{T}_n\times C^n_k} \int du_{T} \\
\times \bigg[
  e^{t\frac{\epsilon e^{-\imath \psi}}{2} \langle \partial,\partial
    \rangle_{W^{T}(u)} } \prod_{i=1}^n \bigg\{ (d_i(T_{\mathfrak{c}})
  -1)! \, \liftR^{d_i(T_{\mathfrak{c}})}(\sigma^{(i)} e^{\imath \frac \psi2}, g)
  \bigg\} \bigg]_{\sigma^{(i)}=0,t=1} \;, \nonumber
\end{multline*}
and the Taylor remainder of order $q$ of
$\mathfrak{K}_{\psi}^{2k}(g,\epsilon)$, denoted by
$R^{2k}_{q,\psi}(g,\epsilon)$ writes:
\begin{multline}
R^{2k}_{q,\psi}(g,\epsilon) = \int_0^1  ds \frac{(1-s)^{q-1}}{(q-1)!} 2^{k-1}
\sum_{n \geq k} \frac{1}{n!} {\left(\frac{-\Pi g}{2}\right)}^{n-1}
\sum_{T_{\mathfrak{c}}\in \mathcal{T}_n\times C^n_k} \int du_{T} \int d\mu_{s \epsilon
  e^{-\imath \psi} W^{T}(u)}(\sigma)\\
  \times {\left(\frac{\epsilon
    e^{-\imath \psi}}{2}\right)}^{q}  {\left( \langle
  \partial,\partial \rangle_{W^{T}(u)}\right)}^{q}\bigg[
  \prod_{i=1}^n \bigg\{ (d_i(T_{\mathfrak{c}}) -1)! \,
  \liftR^{d_i(T_{\mathfrak{c}})}(\sigma^{(i)} e^{\imath \frac \psi2}, g)
  \bigg\}\bigg].\label{restterm}
\end{multline}

We would like to reexpress the remainder as a sum over
some graphs. Since $2q$
derivatives are going to act on each term of the sum over the ciliated trees, and since they can act on each of the $n$
vertices, to the amplitude of a ciliated tree $T_{\mathfrak{c}}$ are
now going to correspond $n^{2q}$ amplitudes indexed by \textit{marks}
$\mathfrak{m}$ in $D^n_{2q}=\{1,...,n\}^{2q}$ corresponding to
the ordered sequence of vertices on which the derivatives are
acting (that is to say that for all $j\in\{1,...,2q\}$, the vertex $\mathfrak{m}_j$ is the vertex on which acted the $j$-th
derivative in eq.~\eqref{restterm}). This allows us to index the sum \eqref{restterm} by \textit{decorated trees}, for which we adopt the next convention:
  \begin{equation}
\text{triples $(T,\mathfrak{c},\mathfrak{m})$
made of a tree $T\in\mathcal T_n$, \emph{cilia} $\mathfrak{c}\in C^n_k$ and \emph{marks} $\mathfrak{m}\in D^n_{2q}$ are denoted by $T_{\mathfrak{c},\mathfrak{m}}$}.\tag{$\star\!$ $\star$}  \label{T_cm} 
    \end{equation}
    
For all $i\in\{1,...,n\}$, we also denote by $d_i(T_{\mathfrak{c}, \mathfrak{m}})=m(i)+d_i(T_{\mathfrak{c}})=m(i)+c(i)+d_i(T)$
the coordination degree of the vertex $i$ in the decorated tree
$T_{\mathfrak{c},\mathfrak{m}}$, with $m(i)= \modulus{\{j \in
\{1,...,2q\}\mid \mathfrak{m}_j=i\}}$ the number of marks of
$i$ and $c(i)=\textbf{1}_{i\in\mathfrak{c}}$ the number of cilia of $i$, which is 0 or 1. With this notation, the rest term
rewrites (recall that with the convention \eqref{T_cm}, $T$ is
the tree $T_{\mathfrak c,\mathfrak m}$ without cilia and marks):
\begin{multline*}
R^{2k}_{q,\psi}(g,\epsilon) = 2^{k}(-\epsilon
)^{q}\int_0^1 ds \frac{(1-s)^{q-1}}{(q-1)!}  \sum_{n \geq k}
\frac{1}{2^nn!} {\left(-\Pi g\right)}^{n-1+q}\sum_{T_{\mathfrak{c},\mathfrak{m}}\in \mathcal{T}_n\times C^n_k\times D^n_{2q}}  \\
  \times \int du_{T}\int
d\mu_{s\epsilon e^{-\imath \psi} W^{T}(u)}(\sigma) \prod_{\ell=1}^{q}
W^{T}_{\mathfrak{m}_{2\ell-1}\mathfrak{m}_{2\ell}}(u) \prod_{i=1}^n \big\{ (d_i(T_{\mathfrak{c},\mathfrak{m}}) -1 )! \,
\liftR^{d_i(T_{\mathfrak{c},\mathfrak{m}})}(\sigma^{(i)} e^{\imath \frac \psi2}, g) \big\}.
\end{multline*}

The remainder can now be bounded using the same arguments as in 
Section~\ref{sec4}, but taking into account the
combinatorics of the new $2q$ derivatives that can act on a
ciliated tree $T_{\mathfrak{c}}$.  We have the following lemma:

\begin{lemma} For all $n\geq k$, $q\geq 0$, the sum over $k$-ciliated (i.e. $\mathfrak c \in C_k^n$) and $2q$-marked (i.e. $\mathfrak m \in D_{2q}^n$) trees with $n$-vertices verifies
\begin{align}\label{combi2}
\frac{1}{n!} \sum_{T_{\mathfrak{c},\mathfrak{m}}\in \mathcal{T}_n\times C^n_k\times D^n_{2q}} \prod_{i=1}^n
(d_i(T_{\mathfrak{c},\mathfrak{m}}) - 1)! = \binom{2n-1}{n-k}
\binom{2n+2q+k-3}{2n-1} \times (2q+k-2)! \; .
\end{align}
\end{lemma}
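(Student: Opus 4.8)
The plan is to reduce \eqref{combi2} to Lemma~\ref{lemmacombina1} by summing over the marks $\mathfrak m\in D^n_{2q}$ first, at a fixed ciliated tree $T_{\mathfrak c}\in\mathcal T_n\times C^n_k$. The key observation is that $\prod_{i=1}^n(d_i(T_{\mathfrak c,\mathfrak m})-1)!$ depends on $\mathfrak m$ only through the occupation numbers $m(i)=\modulus{\{j\in\{1,\dots,2q\}\mid\mathfrak m_j=i\}}$, so the sum over ordered sequences of length $2q$ produces a multinomial weight $\binom{2q}{m(1),\dots,m(n)}$ attached to each composition $(m(1),\dots,m(n))$ of $2q$.

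Writing $D_i\defi d_i(T_{\mathfrak c})$, so that $d_i(T_{\mathfrak c,\mathfrak m})-1=D_i+m(i)-1$ and $(D_i+m(i)-1)!=(D_i-1)!\,m(i)!\,\binom{D_i+m(i)-1}{m(i)}$, one gets
\begin{equation*}
\sum_{\mathfrak m\in D^n_{2q}}\prod_{i=1}^n(d_i(T_{\mathfrak c,\mathfrak m})-1)!=\Big(\prod_{i=1}^n(D_i-1)!\Big)\sum_{\substack{m_1+\dots+m_n=2q\\ m_i\ge 0}}\binom{2q}{m_1,\dots,m_n}\prod_{i=1}^n m_i!\,\binom{D_i+m_i-1}{m_i}.
\end{equation*}
Since $\binom{2q}{m_1,\dots,m_n}\prod_i m_i!=(2q)!$, the inner sum equals $(2q)!\sum_{\sum m_i=2q}\prod_i\binom{D_i+m_i-1}{m_i}=(2q)!\,[y^{2q}]\prod_i(1-y)^{-D_i}$ by the identity $\sum_{m\ge0}\binom{D+m-1}{m}y^m=(1-y)^{-D}$. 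Using the degree count $\sum_i D_i=\sum_i c(i)+\sum_i d_i(T)=k+2(n-1)=2n+k-2$ (the same one that produced the power of $1/\cos$ in Section~\ref{sec4}), this is $(2q)!\binom{2n+2q+k-3}{2q}$, whence
\begin{equation*}
\sum_{\mathfrak m\in D^n_{2q}}\prod_{i=1}^n(d_i(T_{\mathfrak c,\mathfrak m})-1)!=(2q)!\binom{2n+2q+k-3}{2q}\prod_{i=1}^n(d_i(T_{\mathfrak c})-1)!\,.
\end{equation*}

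Plugging this into the left-hand side of \eqref{combi2} and applying Lemma~\ref{lemmacombina1} to the remaining sum $\tfrac1{n!}\sum_{T_{\mathfrak c}}\prod_i(d_i(T_{\mathfrak c})-1)!=\binom{2n-1}{n-k}\binom{2n+k-3}{2n-1}(k-2)!$ turns the left-hand side into $(2q)!\binom{2n+2q+k-3}{2q}\binom{2n-1}{n-k}\binom{2n+k-3}{2n-1}(k-2)!$. A direct cancellation of factorials, namely $(2q)!\binom{2n+2q+k-3}{2q}\binom{2n+k-3}{2n-1}(k-2)!=\frac{(2n+2q+k-3)!}{(2n-1)!}=\binom{2n+2q+k-3}{2n-1}(2q+k-2)!$, then gives exactly the claimed right-hand side. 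I do not expect any genuine obstacle: the only points to watch are the conversion between ordered mark sequences and occupation numbers (the multinomial factor), and the degenerate cases $n=1$ or small $k$, for which one checks directly that the arguments of all factorials and binomials above are non-negative integers.
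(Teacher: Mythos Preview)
Your proof is correct and follows essentially the same route as the paper: both perform the sum over marks $\mathfrak m$ by passing to occupation numbers via the multinomial coefficient and then using the generating function identity $\sum_{m\ge0}\binom{D+m-1}{m}y^m=(1-y)^{-D}$ together with $\sum_i D_i=2n+k-2$ to obtain the factor $(2q)!\binom{2n+2q+k-3}{2q}$. The only organizational difference is that, once the mark sum has been carried out, you invoke Lemma~\ref{lemmacombina1} directly for the remaining ciliated-tree sum, whereas the paper reinserts Cayley's formula~\eqref{Cayley} and redoes the cilia sum from scratch; your reduction is slightly tidier but not a genuinely different argument.
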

In particular, for $q=0$ we recover Lemma~\ref{lemmacombina1}.
\begin{proof}
Injecting Cayley's formula~\eqref{Cayley} in eq.~\eqref{combi2} yields
\begin{align}
  \frac{1}{n!}\sum_{T_{\mathfrak{c},\mathfrak{m}}\in \mathcal{T}_n\times C^n_k\times D^n_{2q}} \prod_{i=1}^n
  (d_i(T_{\mathfrak{c},\mathfrak{m}}) - 1)!
  &= \frac{(n-2)!}{n!}
  \sum^n_{\substack{d_1,...,d_n=1
      \\ \sum_i d_i = 2n-2}}
  \sum_{\mathfrak{c}\in C^n_k}\sum_{\mathfrak{m}\in D^n_{2q}}
  \prod_{i=1}^n \frac{(d_i+c(i)+m(i) -
    1)!}{(d_i - 1)!}\nonumber  \\ 
     &= \frac{(n-2)!}{n!}
  \sum^n_{\substack{d_1,...,d_n=1
      \\\sum_i d_i = 2n-2}}
  \sum_{\mathfrak{c}\in C^n_k} \prod_{i\in\mathfrak{c}}
 d_i
  \sum_{\mathfrak{m}\in D^n_{2q}} \prod_{i=1}^n
  \frac{(d_i+c(i)+m(i)-
    1)!}{(d_i+c(i) - 1)!}   \; .\nonumber
\end{align}
Then, using
\begin{align*}
     \sum_{\mathfrak{m}\in D^n_{2q}} \prod_{i=1}^n
  \frac{(d_i+c(i)+m(i) -
    1)!}{(d_i+c(i) - 1)!}&=\sum_{\substack{m(1),...,m(n) \\\sum_i m(i)=2q}} \frac{(2q)!}{\prod_{i=1}^n m(i)!}  \prod_{i=1}^n
  \frac{(d_i+c(i)+m(i) -
    1)!}{(d_i+c(i) - 1)!}\\
    &=(2q)! \sum_{\substack{m(1),...,m(n) \\\sum_i m(i)=2q}} \prod_{i=1}^n \binom{d_i+c(i)+m(i)-1}{m(i)}     \nonumber\\
    &= (2q)! [x^{2q}] \prod_{i=1}^n \frac{1}{(1-x)^{d_i+c(i)}}=(2q)! [x^{2q}]\frac{1}{(1-x)^{2n-2+k}}\nonumber\\
    &= (2q)! \binom{2n-2+k+2q-1}{2q} =\frac{(2n+2q+k-3)!}{(2n+k-3)!} \,, \nonumber
\end{align*}
$\sum_{\mathfrak{c}\in C^n_k} 1 =n!/(n-k)!$ and $\sum^n_{\substack{d_1,...,d_n=1 \\ \sum_i d_i
    = 2n-2}} \prod_{i\in\mathfrak{c}}d_i = \binom{2n+k-3}{n+k-1} $, we get: 
\begin{align*}
\frac{1}{n!} \sum_{T_{\mathfrak{c},\mathfrak{m}}\in \mathcal{T}_n\times C^n_k\times D^n_{2q}} \prod_{i=1}^n
  (d_i(T_{\mathfrak{c},\mathfrak{m}}) - 1)! &=\frac{(n-2)!}{n!} \times\frac{n!}{(n-k)!}
  \binom{2n+k-3}{n+k-1} \frac{(2n+2q+k-3)!}{(2n+k-3)!}\\&=  \binom{2n-1}{n-k}
\binom{2n+2q+k-3}{2n-1} \times (2q+k-2)! \;.
\end{align*}
\end{proof}

Thanks to this lemma, we can now find an upper bound on the rest
term. All the entries of the $W^T(u)$ matrices are bounded by one, and eq.~\eqref{condsa1} implies that $\modulus{g}^q$ is smaller than $1/2^q$. We also use Lemma~\ref{complexgaussianbound} to bound the
integration over $\sigma$, and we trivially bound all the integrals
over $s$ and the $u_{T}$'s by one leading to:
\begin{multline*}
\lvert R^{2k}_{q,\psi}(g,\epsilon) \rvert \lesssim_k \,\left(\frac{\vert \epsilon \rvert}{2}\right)^{q} \frac{(2q+k-2)!}{(q-1)!} \sum_{n \geq k} \binom{2n-1}{n-k} \binom{2n+2q+k-3}{2n-1} {\left(\frac{\lvert  g \rvert}{2 }\right)}^{n-1}  \\
\times{\left(\frac{1}{\sqrt{\cos{(\psi-\theta)}    }}\right)}^{\! n}
{\left(\frac{2}{ 1+\cos{(\varphi+\psi)}}\right)}^{\! n+q+\frac{k}{2}-1}.
\end{multline*}

Now, let us choose some small $\alpha>0$, and take $(g,\epsilon)\in\mathfrak{C}_{\alpha}$, that is to say such that the
inequalities~\eqref{condsa} are satisfied. 
Note that in this domain, $g$ and $\epsilon$ satisfy
tighter bounds, which are gathered in the following lemma:
\begin{lemma}
For small $\alpha >0$, and for all $(g,\epsilon) \in \mathfrak{C}$ such that~\eqref{condsa} hold, we have:
\begin{subequations}
\begin{align}
  \alpha^2 &\leq \frac{ 1+\cos{(\varphi+\psi)}}{2} \leq 1
  \\ \sqrt{\alpha} &\leq \sqrt{\cos{(\psi-\theta)}} \leq 1 \\ \label{ineqsqrtx}
     \frac{\alpha}{2}&\leq 1 - \sqrt{\gamma }\leq 1 \qquad  \qquad \text{with } \gamma =\frac{4\lvert  g \rvert }{ {(1+\cos{(\varphi+\psi)})\sqrt{\cos{(\psi-\theta)}} }} \; .
\end{align}
\end{subequations}
\end{lemma}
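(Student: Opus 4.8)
The plan is to read each of the three displayed pairs of inequalities off the defining conditions \eqref{condsa1}--\eqref{condsa3} of $\mathfrak{C}_\alpha$ by elementary trigonometry: the half-angle identity $\tfrac{1+\cos x}{2}=\cos^{2}(x/2)$, monotonicity of $\cos$ on $[0,\pi/2]$, and the Jordan-type bound $\sin t\ge\tfrac{2}{\pi}t$ valid for $t\in[0,\pi/2]$.

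First I would treat the two cosine bounds. The upper bounds $\tfrac{1+\cos(\varphi+\psi)}{2}\le1$ and $\sqrt{\cos(\psi-\theta)}\le1$ are trivial. For the lower bound in the first line, \eqref{condsa2} gives $\bigl|\tfrac{\varphi+\psi}{2}\bigr|<\tfrac{\pi}{2}(1-\alpha)=\tfrac{\pi}{2}-\tfrac{\pi\alpha}{2}$, so by monotonicity $\cos\!\bigl(\tfrac{\varphi+\psi}{2}\bigr)>\cos\!\bigl(\tfrac{\pi}{2}-\tfrac{\pi\alpha}{2}\bigr)=\sin\!\bigl(\tfrac{\pi\alpha}{2}\bigr)\ge\alpha$ for $\alpha\in(0,1]$; squaring yields $\tfrac{1+\cos(\varphi+\psi)}{2}=\cos^{2}\!\bigl(\tfrac{\varphi+\psi}{2}\bigr)>\alpha^{2}$. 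The identical computation applied to \eqref{condsa3} gives $\cos(\psi-\theta)>\sin\!\bigl(\tfrac{\pi\alpha}{2}\bigr)\ge\alpha$, hence $\sqrt{\cos(\psi-\theta)}>\sqrt{\alpha}$; in particular both cosines are strictly positive, which justifies the upper bounds above and legitimizes dividing by them below.

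Next I would handle \eqref{ineqsqrtx}. By the previous step the denominator $(1+\cos(\varphi+\psi))\sqrt{\cos(\psi-\theta)}$ of $\gamma$ is positive, so $\gamma\ge0$ and thus $1-\sqrt{\gamma}\le1$. Dividing \eqref{condsa1} by $\tfrac14(1+\cos(\varphi+\psi))\sqrt{\cos(\psi-\theta)}$ gives $\gamma<1-\alpha$, hence $\sqrt{\gamma}<\sqrt{1-\alpha}$ and $1-\sqrt{\gamma}>1-\sqrt{1-\alpha}=\tfrac{\alpha}{1+\sqrt{1-\alpha}}\ge\tfrac{\alpha}{2}$, the last inequality because $\sqrt{1-\alpha}\le1$. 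This yields $1-\sqrt{\gamma}>\tfrac{\alpha}{2}$ and completes the argument.

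There is no genuine obstacle here: the whole statement is a short chain of elementary estimates. The only mild subtlety I would flag is the passage from the angular constraints $|\varphi+\psi|<\pi(1-\alpha)$ and $|\psi-\theta|<\tfrac{\pi}{2}(1-\alpha)$ to the polynomial-in-$\alpha$ lower bounds, which is exactly where $\sin(\pi\alpha/2)\ge\alpha$ is used; and the observation that, since only $\varphi,\psi,\theta,\alpha$ enter, all the resulting bounds are automatically uniform over $\mathfrak{C}_\alpha$, i.e.\ independent of $g$ and $\epsilon$, as is needed for the uniform remainder estimate in Lemma~\ref{lemmaboundrest}.
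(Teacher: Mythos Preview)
Your proof is correct and follows essentially the same route as the paper's: both use the half-angle identity, the complementary-angle relation $\cos\!\bigl(\tfrac{\pi}{2}-t\bigr)=\sin t$, and the bound $\sin(\pi\alpha/2)\ge\alpha$ to extract the polynomial-in-$\alpha$ lower bounds, and both handle \eqref{ineqsqrtx} via $\sqrt{1-\alpha}\le1-\tfrac{\alpha}{2}$. Your write-up is in fact slightly sharper, since you make the Jordan inequality explicit and obtain the bounds for all $\alpha\in(0,1]$ rather than just ``for small $\alpha$''.
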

\begin{proof*}{Proof.}
Since $0\leq \lvert \varphi+\psi \rvert \leq \pi(1 - \alpha)$,
$\cos^2{\frac{\pi(1-\alpha)}{2}} \leq \cos^2{\frac{\lvert \varphi+\psi
    \rvert}{2}} \leq 1$, and
$\cos^2{\frac{\pi(1-\alpha)}{2}}=\sin^2{\frac{\pi\alpha}{2}} \geq
\alpha^2$ for small $\alpha >0$. Similarly, since $0 \leq \lvert
\psi-\theta \rvert \leq \frac{\pi}{2} (1- \alpha)$,
$\cos{(\frac{\pi}{2}(1-\alpha))} \leq \cos{\lvert \psi-\theta \rvert}
\leq 1$ so that $\sqrt{\cos{(\frac{\pi}{2}(1-\alpha))}} \leq
\sqrt{\cos{\lvert \psi-\theta \rvert}} \leq 1$ and
$\sqrt{\cos{(\frac{\pi}{2}(1-\alpha))}}=\sqrt{\sin{(\frac{\pi\alpha}{2}})}
\geq \sqrt{\alpha}$ for small $\alpha >0$. Finally, since $0 \leq \gamma
\leq 1- \alpha$, $0 \leq \sqrt{\gamma}\leq \sqrt{1 - \alpha}$ and $ \sqrt{1
  - \alpha} \leq 1 - \frac{\alpha}{2}$ for small $\alpha > 0$ so that
$\frac{\alpha}{2} \leq 1-\sqrt{\gamma}\leq 1$.
\end{proof*} 

Combining this with the bounds in eqs.~\eqref{resbound} and~\eqref{GaussianComplex}, and using  $(2q+k-2)!/(q-1)! \lesssim_k
4^q q!$ and $\binom{2n-1}{n-k} \leq 4^n$ by Stirling's formula, we obtain the following upper bound on the rest term:
\begin{align*}
\lvert R^{2k}_{q,\psi}(g,\epsilon) \rvert \lesssim_k &{} {\vert \epsilon
  \rvert}^{q} 2^q q! \sum_{n \geq k} \binom{2n+2q+k-3}{2n-1}
       {\left(\frac{4 \lvert g
           \rvert}{(1+\cos{(\varphi+\psi)})\sqrt{\cos{(\psi-\theta)}}
         }\right)}^{n-1}\\
        &\times{\left(\frac{2}{
           1+\cos{(\varphi+\psi)}}\right)}^{q+\frac{k}{2}}{\left(\frac{1}{\sqrt{\cos{(\psi-\theta)}}
         }\right)}\\
        \lesssim_k&{}
       (\alpha^2)^{-q-\frac{k}{2}} (\sqrt{\alpha})^{-1} {\vert
         \epsilon \rvert}^{q} 2^q q! \sum_{n \geq k}
       \binom{2n+2q+k-3}{2n-1} {\gamma}^{n-1}.
\end{align*}

Recall that $\gamma \in [0,1-\alpha \mathopen]$ in
$\mathfrak{C}_{\alpha}$. Let us denote $f(\gamma) = \sum_{n \geq k}
\binom{2n+2q+k-3}{2n-1} {\gamma}^{n-1}$ so that:
\begin{equation*}
\lvert R^{2k}_q(g,\epsilon) \rvert \lesssim_k \alpha^{-1/2-k-2q} {\vert \epsilon \rvert}^{q} 2^q q!f(\gamma).
\end{equation*}

In order to conclude, it suffices to prove that $f(\gamma)$ is
exponentially bounded in $q$ for all $\gamma \in [0,1-\alpha \mathopen]$. 
This is stated in the following lemma:
\begin{lemma}
At large $q$, for all $\gamma \in [0,1-\alpha\mathopen]$, we have that
 \begin{equation*}
   f(\gamma) \leq  \frac{4q}{{(1-\sqrt{\gamma})}^{2q+k}}.
\end{equation*}
\end{lemma}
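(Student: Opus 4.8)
The plan is to bound the series
\[
f(\gamma)=\sum_{n\ge k}\binom{2n+2q+k-3}{2n-1}\gamma^{n-1}
\]
by recognizing it, after a single elementary binomial manipulation, as a subseries of the even part of a standard geometric-type expansion. Throughout one has $0\le\gamma\le 1-\alpha<1$ and $q$ large (in particular $q\ge1$); for fixed $q,k$ the summand is, as a function of $n$, a polynomial of degree $2q+k-2$ times $\gamma^{n-1}$, so the series converges.

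First I would trade the awkward lower index $2n-1$ for a fixed one. By complementation $\binom{2n+2q+k-3}{2n-1}=\binom{2n+2q+k-3}{2q+k-2}$, and the identity $\binom{N}{j}=\frac{j+1}{N-j}\binom{N}{j+1}$ applied with $N=2n+2q+k-3$ and $j=2q+k-2$ (so that $N-j=2n-1$) gives
\begin{equation*}
\binom{2n+2q+k-3}{2n-1}=\frac{2q+k-1}{2n-1}\,\binom{2n+2q+k-3}{2q+k-1}\,.
\end{equation*}
For $n\ge k\ge1$ and $q\ge1$ we have $(k-1)(4q-1)\ge0$, i.e.\ $2q+k-1\le 2q(2k-1)\le 2q(2n-1)$, so the prefactor is at most $2q$ uniformly in $n\ge k$, and hence $f(\gamma)\le 2q\sum_{n\ge k}\binom{2n+2q+k-3}{2q+k-1}\gamma^{n-1}$.

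Then I would evaluate what remains via the substitution $x=\sqrt\gamma\in[0,1)$. Writing $\binom{2n+2q+k-3}{2q+k-1}=\binom{2(n-1)+(2q+k-1)}{2q+k-1}$, all summands are nonnegative, so enlarging the range to $n\ge1$ and afterwards adding the odd powers of $x$ only increases the sum:
\begin{equation*}
\sum_{n\ge k}\binom{2n+2q+k-3}{2q+k-1}\gamma^{n-1}\le\sum_{j\ge0}\binom{2j+(2q+k-1)}{2q+k-1}x^{2j}\le\sum_{m\ge0}\binom{m+(2q+k-1)}{2q+k-1}x^{m}=\frac{1}{(1-x)^{2q+k}}\,.
\end{equation*}
Combining the two displays yields $f(\gamma)\le\frac{2q}{(1-\sqrt\gamma)^{2q+k}}\le\frac{4q}{(1-\sqrt\gamma)^{2q+k}}$, as claimed.

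The only genuinely delicate point is the first step: because the lower index $2n-1$ carries a factor $2$, the series is a \emph{bisection} of the expansion of $(1-x)^{-(2q+k)}$ rather than that expansion itself, so one must shift one unit from the lower to the upper part of the binomial to expose a fixed lower index $2q+k-1$, paying for it with the factor $\tfrac{2q+k-1}{2n-1}$, which is uniformly $\le 2q$ on $n\ge k$ precisely because $q$ is large. Once this is done, the substitution $x=\sqrt\gamma$ together with the positivity of all Taylor coefficients of $(1-x)^{-(2q+k)}$ makes the remaining estimates routine, and the gap between $2q$ and the stated $4q$ leaves ample slack.
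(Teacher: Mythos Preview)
Your proof is correct and follows essentially the same route as the paper: both use the identity $\binom{2n+2q+k-3}{2n-1}=\tfrac{2q+k-1}{2n-1}\binom{2n+2q+k-3}{2n-2}$ (your $\binom{2n+2q+k-3}{2q+k-1}$ is the same binomial by complementation), bound the prefactor uniformly in $n\ge k$, substitute $x=\sqrt{\gamma}$, and dominate the even subseries by the full generating series $(1-x)^{-(2q+k)}$. Your bound $\tfrac{2q+k-1}{2n-1}\le 2q$ via $(k-1)(4q-1)\ge 0$ is in fact a bit sharper and more explicit than the paper's ``for $q$ large enough $\le 4q$'', but this is a minor refinement rather than a different argument.
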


\begin{proof*}{Proof.}
We note that $\binom{2n+2q+k-3}{2n-1} = \frac{2q+k-1}{2n-1}\binom{2n+2q+k-3}{2n-2}$ and that for any $k$ and $n$, for $q$ large enough $\frac{2q+k-1}{2n-1} \leq 4q$ so that $\binom{2n+2q+k-3}{2n-1} \leq 4q \binom{2n+2q+k-3}{2n-2}$. This implies that for all $k \geq 1$:
\begin{align}
f(\gamma) &= \sum_{n \geq k} \binom{2n+2q+k-3}{2n-1} {\gamma}^{n-1} \leq 4q
\sum_{n \geq k} \binom{2n+2q+k-3}{2n-2} {\gamma}^{n-1} \nonumber\\ &\leq 4q \sum_{n
  \geq k} \binom{2n+2q+k-3}{2n-2} {(\sqrt{\gamma})}^{2n-2}  \leq 4q \sum_{n \geq 2k-2} \binom{n+2q+k-1}{n}
            {(\sqrt{\gamma})}^{n} \nonumber \\ &\leq 4q \sum_{n \geq 0}
            \binom{n+2q+k-1}{n} {(\sqrt{\gamma})}^{n} \;. \nonumber
\end{align}
In the second line we bound the even part of the series by the total series, using the positivity of the odd part. Observing that $\sum_{n \geq 0} \binom{n+2q+k-1}{n}
\sqrt{\gamma}^n=\frac{1}{{(1-\sqrt{\gamma})}^{2q+k}}$ we are done.
\end{proof*}

Combining this lemma with eq.~\eqref{ineqsqrtx}, we can bound $f$ uniformly as
$f(\gamma)
\lesssim_k \frac{4q}{\alpha^{2q+k}}$ for all $\gamma \in
        [0,1-\alpha\mathopen]$ and denoting $C_\alpha =
        \alpha^{-2k-1/2}$ and $K_\alpha= 6\alpha^{-4}$ at $q$ large enough, for all $k \geq 1$:
\begin{align}\nonumber
\lvert R^{2k}_{q,\psi}(g,\epsilon) \rvert \lesssim_k C_\alpha K_\alpha^q {\lvert \epsilon \rvert}^{q} q! 
\; ,
\end{align}
with $C_\alpha$ and $K_\alpha$ independent of $g$ for 
$g\in\mathfrak{C}_{\alpha}$, which concludes the proof of
lemma~\eqref{lemmaboundrest}.\end{proof*}

\begin{remark}
Note that, as $K_\alpha \sim 
O(1) \alpha^{-4}$ and $C_\alpha \sim
{\alpha^{-2k-1/2}}$ our bounds deteriorate for $\alpha\to 0$ that is when 
we take a subdomain closer and closer to the full $\mathfrak{C} $. 
\end{remark}

\section*{Conclusion}
The Loop Vertex Expansion made possible to extract the logarithm
of the partition function, obtain the maximal analyticity domain (Thm. \ref{THM1}), and the domain of $1/N$-Borel summability (Thm. \ref{THM2}) of the cumulants of the quartic
$\grpO(N)$-vector model.  \par 
The next step would be to adapt our analysis to the more involved case
of a (Euclidean) quantum field theory. The fist case of interest is
the two dimensional quartic $\grpO(N)$-vector model, whose
renormalisation is limited to the Wick ordering. Two dimensional
quantum field theory was studied with a modification of the LVE known
as the Multiscale LVE (MLVE) in \cite{Rivasseau2014aa}, where the
Borel summability of free energy in the coupling constant is
established. This study should be generalized to an $\grpO(N)$-vector
model. However, the adaptation of the (M)LVE beyond dimension two seems out of reach. 

\newpage
\appendix
\appendixpage
\section{The Nevalinna-Sokal theorem}
A formal power series $A(z)=\sum_{k=0}^\infty a_k z^k$ such that $B(t) = \sum_{k=0}^\infty  \textfrac{a_k}{k!} t^k $ is absolutely convergent in some disk centered at zero an admits an analytic continuation along the real axis such that $|B(t)| < K e^{-t/R}$ for some $K,R\in \R_+$ is called a \emph{Borel summable series}. The function  $\sum_{k=0}^{\infty} a_k/k!t^k$ is called the \emph{Borel transform} of $A(z)$ and the \emph{Borel sum} of $A(z)$ is the Laplace transform of its Borel transform:
\begin{align} \nonumber
f(z) =  \frac{1}{z}  \int_0^{\infty} dt \; e^{-t/z} B(t) \;.    
\end{align}
The Borel sum of a series, if it exists, is unique.

A function $f:\C\rightarrow\C$ which is analytic in a disk tangent to the imaginary axis in $0$ and
has an asymptotic series in $0$ (which can have zero radius of convergence) such that the Taylor rest term of order $q$ in $0$ of $f$ grows no faster than $q!$ is called a \emph{Borel summable function} \cite{Sokal1980aa}.

These two notions are intimately related: the Borel sums of Borel summable series are Borel summable functions (this is straightforward to prove). The asymptotic series of Borel summable functions are Borel summable series \cite{Sokal1980aa}. 
We present here a slightly modified version \cite{Rivasseau2007aa} of the (optimal) Nevanlinna-Sokal theorem on Borel summability which introduces the notion of uniform Borel summability with respect to some parameter.

\begin{theorem}[Nevanlinna-Sokal]\label{thm:Sokal}
Let $f$  be a function $f:\C^2\rightarrow\C$, $(z,w)\mapsto f(z,w)$ and let $(f_k(w))_{k\geq0}$ be the coefficients of the asymptotic series of $f$ in $z=0$, at fixed $w\in\C$. If $f$ is analytic in its first variable $z$ in a domain ${\rm Disk}_R  = \{z\in\C \mid\Re {(1/z)}>1/R  \}$ with $R>0$ independent of $w$ and there is a domain $\cal D \subset \C$ such that for
all $(z,w)\in {\rm Disk}_R \times {\cal D} $, there exist some
constants $C,K>0$ \textit{independent of $w$} for which the following bound on the rest term of $f$ holds for $q$ large enough:
\begin{equation*}
    \left\vert f(z,w) - \sum_{k=0}^q f_k(w)z^k\right\vert \leq C K^q \modulus{z}^q q!,
\end{equation*}
then $f$ is called a Borel summable function in $z$ uniformly in $w$ in ${\rm Disk}_R \times {\cal D}$.

Under these conditions, for all $w\in\cal D$, the Borel transform in $z$ of the asymptotic series of $f$, \[\mathcal{B}:(t,w)\mapsto \sum_{k=0}^{+\infty}
\frac{f_k(w)}{k!}t^k, \nonumber\] 
is analytic in a disk of radius $K^{-1}$ in $t$ and can be analytically continued to the strip $ \{t\in \C\mid
\modulus{ \Im t } <K^{-1}\}$ and in this strip obeys the exponential bound $|\mathcal B(t)|<e^{t/R}$. Moreover, for all $(z,w)\in {\rm Disk}_R \times {\cal D} $ we can reconstruct the function $f(z,w)$ by:
\begin{align}\nonumber
f(z,w)=\frac{1}{z}\int_{\R_+} dt \; e^{-t/z} \mathcal{B}(t,w).
\end{align}
\end{theorem}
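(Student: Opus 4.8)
The plan is to run the classical Nevanlinna--Sokal argument \cite{Sokal1980aa,Rivasseau2007aa} and check that every constant it produces depends only on $C$, $K$, $R$ and never on $w$, which is automatic since these are the only data entering the hypotheses. First, applying the rest-term bound at orders $q$ and $q-1$ and subtracting gives, for $z\in {\rm Disk}_R$ and $q$ large, $|f_q(w)|\,|z|^q\le CK^q|z|^q q!+CK^{q-1}|z|^{q-1}(q-1)!$; choosing $|z|$ of order $R$ (admissible in ${\rm Disk}_R$) yields $|f_q(w)|\le C'K^q q!$ with $C'=C'(C,K,R)$, so $\mathcal B(t,w)=\sum_{k\ge 0}\frac{f_k(w)}{k!}t^k$ converges and is analytic for $|t|<K^{-1}$, uniformly in $w$.

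Second, I would obtain the analytic continuation of $\mathcal B(\cdot,w)$ and the exponential bound from a Bromwich-type integral. After the substitution $u=1/z$, the map $u\mapsto f(1/u,w)$ is analytic on $\{\Re u>1/R\}$ with asymptotic expansion $\sum_k f_k(w)u^{-k}$ at $u=\infty$; matching Taylor coefficients in $t$ one checks that, on $\{|t|<K^{-1}\}$,
\begin{equation*}
\mathcal B(t,w)=\frac{1}{2\pi\imath}\int_{\Gamma}e^{ut}\,\frac{f(1/u,w)}{u}\,du
\end{equation*}
for $\Gamma$ a vertical line $\{\Re u=a\}$, $a>1/R$ (after subtracting the $k=0$ term so that the integral converges absolutely). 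Deforming $\Gamma$ inside $\{\Re u>1/R\}$ --- pushing it against the edge $\{\Re u=1/R\}$ and bending its two ends so that $\Re(ut)\to-\infty$ along them --- continues $\mathcal B(\cdot,w)$ to a $K^{-1}$-wide neighbourhood of $\R_+$ (the region in the statement, together with the disk of the first step), and since the contour can be kept with $\Re u$ arbitrarily close to $1/R$ one has $|e^{ut}|\le e^{|t|/R}$ there, giving the stated bound. This contour deformation --- precisely the Nevanlinna estimate --- is the delicate point, and it is the disk of the first step that upgrades the sector one would naively obtain to the claimed strip.

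Third, to recover the Laplace representation I would set $G(z,w)=\frac1z\int_{\R_+}e^{-t/z}\mathcal B(t,w)\,dt$, which converges and is analytic for $z\in{\rm Disk}_R$ by the exponential bound. Splitting $\mathcal B(t,w)=\sum_{k<q}\frac{f_k(w)}{k!}t^k+\rho_q(t,w)$ and using $\frac1z\int_{\R_+}e^{-t/z}\frac{t^k}{k!}\,dt=z^k$ gives $G(z,w)=\sum_{k<q}f_k(w)z^k+\frac1z\int_{\R_+}e^{-t/z}\rho_q(t,w)\,dt$; estimating $\rho_q$ by Cauchy bounds on $\mathcal B^{(q)}$ in the strip together with its exponential growth shows the last integral is $O(K_2^q|z|^q q!)$ for $z$ small in ${\rm Disk}_R$. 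Hence $f(\cdot,w)$ and $G(\cdot,w)$ are both analytic in ${\rm Disk}_R$ with the same asymptotic expansion and the same kind of rest-term bound, so $h:=f(\cdot,w)-G(\cdot,w)$ satisfies $|h(z)|\le C''e^{-c/|z|}$ near $0$ after optimising over $q$, whence $h\equiv 0$ by the Carleman/Phragm\'en--Lindel\"of uniqueness theorem for asymptotic expansions in a disk tangent to the imaginary axis.

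The main obstacle is the second step: organising the contour deformation so as to reach the full strip $\{|\Im t|<K^{-1}\}$ with the sharp rate $e^{|t|/R}$, rather than merely a sector around $\R_+$. Everything else is routine bookkeeping, and the uniformity in $w$ is inherited directly from the uniformity of the hypotheses.
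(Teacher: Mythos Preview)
The paper does not prove this theorem: it is stated in the appendix as a (slightly uniformised) citation of the classical Nevanlinna--Sokal result, with references to \cite{Sokal1980aa} and \cite{Rivasseau2007aa}, and no proof is given. So there is no ``paper's own proof'' to compare against.

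Your sketch is a faithful outline of the standard argument and is correct in architecture. A couple of small points worth tightening if you flesh it out. In the first step, the hypothesis only gives the rest-term bound \emph{for $q$ large enough}, so when you subtract consecutive remainders you should say you are doing so for $q\ge q_0$; the finitely many $f_k(w)$ with $k<q_0$ are then handled separately (they are finite for each $w$, and if you want uniformity in $w$ you get it from the $q=q_0$ bound evaluated at a fixed $z$). In the second step, your parenthetical about subtracting $f_0(w)$ is essential for the Bromwich integral to converge on a vertical line, since $f(1/u,w)/u\sim f_0(w)/u$ at infinity; it would be cleaner to write the representation directly for $\mathcal B(t,w)-f_0(w)$. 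Finally, the contour deformation that upgrades a sector to the full strip $\{|\Im t|<K^{-1}\}$ is exactly Sokal's improvement over Nevanlinna and does require the \emph{disk} ${\rm Disk}_R$ (not merely a sector) in the $z$-plane; you identify this correctly as the crux, and the details are in \cite{Sokal1980aa}. The uniformity in $w$ is, as you say, automatic once the constants in the hypotheses are uniform.
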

\begin{remark}\label{remarksokaldisk}
For $R>0$, the domain $\{z\in\C
\mid\Re {(1/z)}>1/R \}$ is a disk of diameter $R$
tangent to the imaginary axis at the origin. We call Sokal disk of diameter $R$ such a disk, see Figure \ref{fig:Sokal}. 
\end{remark}

\begin{figure}[ht]\centering
\includegraphics[width=7cm]{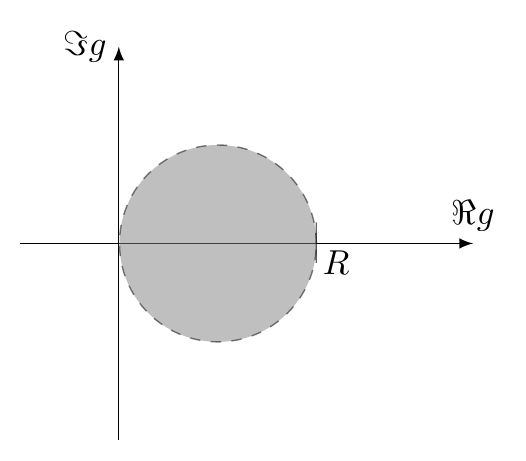}
\caption{A Sokal disk $ {\rm Disk}_R = \{ g\in\C \mid \Re (1/g) > 1/R \} $ \label{fig:Sokal}}
\end{figure}

\newpage
\section{Finite dimensional Gaussian measures}
\subsection{Gaussian expectations}\label{gaussexp}
Let $n$ a strictly positive integer. We are interested in the centered
Gaussian distributions on $\mathbb{R}^n$, that is to say the centered
probability distributions on $\mathbb{R}^n$ such that their cumulants
of order higher than or equal to three are zero. \\

\noindent
\textbf{Case $\mathbf{n=1}$.} Let us
first consider the one dimensional case. In this case, for $\sigma >0$, the Normal 
distribution of variance $\sigma^2$, $\mathcal{N}(0,\sigma^2)$ is Gaussian and
its density with respect to the Lebesgue measure on
$\mathbb{R}$ is
$(\sqrt{2\pi}\sigma)^{-1}e^{-\frac{x^2}{2\sigma^2} } $. But this is
not the only Gaussian distribution on $\mathbb{R}$: the Dirac
distribution $\delta$ whose expectation $\mathbb{E}_{0}$ is defined by
$\mathbb{E}_{0}[F(x)]=F(0)$ for all functions $F : \mathbb{R}
\rightarrow \mathbb{C}$ is also Gaussian with variance 0. There is no other 
Gaussian distributions on $\mathbb{R}$, so that the
Gaussian distributions on $\mathbb{R}$ are determined by
their variances and share the following property.
\begin{definition}[Gaussian distributions in dimension one]
For all $\varepsilon \in \mathbb{R}_{+}$, there exists a unique
centered Gaussian distribution of variance $\varepsilon$. Let us
denote it $\mu_{\varepsilon}$, and $\mathbb{E}_{\varepsilon}$ the
expectation with respect to $\mu_{\varepsilon}$. For $F : \mathbb{R}
\rightarrow \mathbb{C}\in L^1(\R,\mu_{\varepsilon})$,
$\mathbb{E}_{\varepsilon}$ is defined by the following identity:
\begin{align}\nonumber
    \mathbb{E}_{\varepsilon}[F(x)] = [e^{\frac{\varepsilon}{2} \partial_x^2}F(x)]_{x=0} \;.
\end{align}
Furthermore, if $\varepsilon \neq 0$, $\mu_{\varepsilon} =
\mathcal{N}(0,\varepsilon)$ and $\mu_{\varepsilon} = \delta$
otherwise.
\end{definition}
The previous definition immediately implies that for all $F :
\mathbb{R} \rightarrow \mathbb{C}$, $\mathbb{E}_{\varepsilon}[F(x)]
\xrightarrow[\varepsilon \rightarrow 0]{} \mathbb{E}_{0}[F(x)]$,
which means that $\mathcal{N}(0,\varepsilon) \xrightarrow[\varepsilon \rightarrow 0]{\text{in law}} \delta$. \\
\textbf{Case $\mathbf{n \geq 2}$.} The Gaussian distribution on $\mathbb{R}^n$ is
a straightforward generalization of that on $\mathbb{R}$, as stated in
the following definition:
\begin{definition}[Gaussian distributions in dimension $n$]
Let $C$ in $M_n(\mathbb{R})$ a symmetric positive matrix \textit{not
  necessarily invertible}. There exists a unique centered Gaussian
distribution of covariance $C$. Let us denote it $\mu_{C}$, and
$\mathbb{E}_{C}$ the expectation with respect to $\mu_{C}$. For $F :
\mathbb{R}^n \rightarrow \mathbb{C} \in L^1(\R^n,\mu_{C})$,
$\mathbb{E}_{C}$ is defined by the following identity:
\begin{align}\nonumber
\mathbb{E}_{C}[F(X)] = [e^{\frac{1}{2}\langle \partial,\partial \rangle_C}F(X)]_{X=0} \;.
\end{align}
Furthermore, if $C$ is in $GL_n(\mathbb{R})$, $\mu_C=\mathcal{N}(0,C)$
where $\mathcal{N}(0,C)$ is the Normal distribution of covariance $C$
that has density $\frac{1}{\sqrt{(2\pi)^n \det{C}}}e^{-\frac{1}{2}
  \langle X,X \rangle_{C^{-1}}} $ with respect to the Lebesgue measure
on $\mathbb{R}^n$. If $C$ is not invertible, then
$C_{\varepsilon} = C + \varepsilon P$, with $P$ the projector on the
kernel\footnote{Suppose $C$ has rank $k<n$. Then there exists
$\lambda_1,\dotsc,\lambda_k$ non-negative and $O \in \grpO_n(\R)$ such that
$C = O^T
\mathrm{diag}(\lambda_1,\dotsc,\lambda_k,\underbrace{0,\dotsc,0}_{n-k \text{
    times}})O$. Then, $P = O^T \mathrm{diag}(\underbrace{0,\dotsc,0}_{k
  \text{ times}},\underbrace{1,\dotsc,1}_{n-k \text{ times}})O$,
$C_\varepsilon = C+\varepsilon P \in \mathrm{GL}(\R^n)$ for all $\varepsilon
>0$, and $C_\varepsilon \rightarrow C$ when $\varepsilon \rightarrow
0$.  }  of $C$ is invertible and:
\begin{align*}
    \mathcal{N}(0,C_{\varepsilon}) \xrightarrow[\varepsilon \rightarrow 0]{\text{in law}} \mu_C  \;,
\end{align*}
which implies that:
\begin{equation*}
    \mathbb{E}_{C}[F(X)] = \lim_{\varepsilon \rightarrow 0}
    \mathbb{E}_{\mathcal{N}(0,C_{\varepsilon})}[F(X)] =
    \lim_{\varepsilon \rightarrow 0} \int_{\mathbb{R}^n}
    \frac{1}{\sqrt{(2\pi)^n
        \det{C_{\varepsilon}}}}e^{-\frac{1}{2}\langle X,X
      \rangle_{C_\varepsilon^{-1}}} F(X) d^n X \;.
\end{equation*}
\end{definition}
Once again, there are two ways of thinking of $\mu_C$ if $C$ is not
invertible: either we see it as a differential operator or as the limit 
in law of a sequence of Normal distributions. Both are
useful: the former makes the interpolation
between different covariances more transparent, while the latter makes bounding the 
expectations easier.

\subsection{Complex Gaussian integration}\label{app:compbound}

\begin{definition}[Complex Gaussian expectation] Let $n$ be a positive integer, $z=\modulus{z}e^{\imath\alpha}\in\{\Re z>0\}$, $C \in
M_n(\R)$ symmetric positive semi-definite, and $F \in
L^1(\R^n,\mu_{{\textfrac{\lvert z\rvert^2 C}{\Re z}}})$ a $\C$-valued function. We call complex Gaussian integration of covariance $zC$ the quantity denoted
  $\mathbb{E}_{zC}$ and defined by:
\begin{equation*}
\mathbb{E}_{zC}[F(X)] =[e^{\frac{z}{2}\langle \partial,\partial
    \rangle_C}F(X)]_{X=0}= \lim_{\varepsilon \rightarrow 0}
\mathbb{E}_{\frac{\lvert
    z\rvert^2}{\Re z}C_\varepsilon}[\frac{1}{\sqrt{(\cos \alpha\; e^{\imath\alpha})^n}} e^{\frac{\imath\Im z}{2\lvert
      z\rvert^2}\langle X,X \rangle_{C_\varepsilon^{-1}}}F(X)]
\end{equation*}
with $C_\varepsilon$ a sequence such that
$\mathcal{N}(0,C_\varepsilon) \rightarrow \mu_C$ (if $C$ is
invertible, take $C_\varepsilon=C$ constant).
\end{definition}
\begin{lemma}[Complex Gaussian bound]\label{complexgaussianbound} With the same notations,
  if $z = |z|e^{\imath \alpha}$ with $\alpha \in
  (-\frac\pi2,\frac\pi2)$, we have that:
\begin{equation}\nonumber
 \left| \mathbb{E}_{zC}[F(X)]\right| \le \frac{1}{\cos^{n/2}\alpha }
 \; \sup_{X \in \mathbb{R}^n} | F(X ) |\;.
\end{equation}
\end{lemma}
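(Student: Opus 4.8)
The plan is to read everything off the definition of $\mathbb{E}_{zC}$: reduce to the case of an invertible covariance, rewrite the complex Gaussian expectation as an honest integral against a probability density, and then separate the modulus from the phase. Writing $z=\modulus{z}e^{\imath\alpha}$ with $\alpha\in(-\pi/2,\pi/2)$, the definition of the complex Gaussian integration gives
\begin{equation*}
\mathbb{E}_{zC}[F(X)]=\frac{1}{\sqrt{(\cos\alpha\,e^{\imath\alpha})^n}}\,\lim_{\varepsilon\to 0}\,\mathbb{E}_{\frac{\modulus{z}^2}{\Re z}C_\varepsilon}\Big[e^{\frac{\imath\Im z}{2\modulus{z}^2}\langle X,X\rangle_{C_\varepsilon^{-1}}}F(X)\Big],
\end{equation*}
where each $C_\varepsilon$ is symmetric positive \emph{definite} and $\mathcal N(0,C_\varepsilon)\to\mu_C$. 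Since the constant $(\cos\alpha\,e^{\imath\alpha})^{-n/2}$ is independent of $X$ and of $\varepsilon$, it suffices to bound the $\varepsilon$-th term of the limit by $(\cos\alpha)^{-n/2}\sup_X\modulus{F(X)}$ and then pass to the limit; equivalently, it suffices to prove the bound when $C$ itself is invertible, in which case $\mathbb{E}_{\frac{\modulus{z}^2}{\Re z}C}$ is the genuine real Gaussian probability distribution $\mathcal N\big(0,\frac{\modulus{z}^2}{\Re z}C\big)$. If $\sup_X\modulus{F(X)}=+\infty$ there is nothing to prove, so I may assume $F$ bounded, hence in $L^1$ of every finite Gaussian measure and all integrals below converge.

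The key facts are elementary. Since $\Re z>0$ one has $1/z=\bar z/\modulus{z}^2$, so $\Re(1/z)=\cos\alpha/\modulus{z}>0$ and $\frac{\modulus{z}^2}{\Re z}C$ is indeed a legitimate (positive-definite) covariance; the factor $\frac{z\,\Re z}{\modulus{z}^2}=\cos\alpha\,e^{\imath\alpha}$ is precisely what is pulled out of the normalisation $\sqrt{(2\pi z)^n\det C}$, using the principal branch of the square root (licit because $\Re z>0$). Finally, for real $X$ the quadratic form $\langle X,X\rangle_{C^{-1}}$ (resp. $\langle X,X\rangle_{C_\varepsilon^{-1}}$) is real, so the residual exponential $e^{\frac{\imath\Im z}{2\modulus{z}^2}\langle X,X\rangle_{C^{-1}}}$ has modulus one on $\R^n$.

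Putting these together, for invertible $C$,
\begin{align*}
\big|\mathbb{E}_{zC}[F(X)]\big|
&=\frac{1}{(\cos\alpha)^{n/2}}\,\Big|\int d\mu_{\frac{\modulus{z}^2}{\Re z}C}(X)\,e^{\frac{\imath\Im z}{2\modulus{z}^2}\langle X,X\rangle_{C^{-1}}}F(X)\Big|\\
&\le\frac{1}{(\cos\alpha)^{n/2}}\int d\mu_{\frac{\modulus{z}^2}{\Re z}C}(X)\,\modulus{F(X)}
\le\frac{1}{(\cos\alpha)^{n/2}}\,\sup_{X\in\R^n}\modulus{F(X)},
\end{align*}
the last step because $\mu_{\frac{\modulus{z}^2}{\Re z}C}$ is a probability measure. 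The identical chain applies verbatim with $C$ replaced by $C_\varepsilon$, so the bound is uniform in $\varepsilon$ and survives the limit $\varepsilon\to 0$, proving the lemma (and hence also Lemma~\ref{complexgaussianboundmain}, its avatar used in the body). There is no genuine analytic obstacle here; the only thing needing a little care is the bookkeeping of the regularisation together with the correct identification of the phase factor $\cos\alpha\,e^{\imath\alpha}$ coming out of $\sqrt{(2\pi z)^n\det C}$.
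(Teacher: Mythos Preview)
Your proof is correct and follows essentially the same route as the paper: use the defining regularisation with invertible $C_\varepsilon$, extract the constant factor $(\cos\alpha\,e^{\imath\alpha})^{-n/2}$, observe that the residual quadratic exponential is a pure phase on $\R^n$, bound the remaining real Gaussian expectation by $\sup_X|F(X)|$, and pass to the limit. Your write-up is slightly more detailed (you spell out why $\frac{|z|^2}{\Re z}C$ is a valid covariance and dispose of the case $\sup_X|F|=+\infty$), but the argument is the same.
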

\begin{proof} Since for a convergent sequence the modulus and the
  limit commute,
  \begin{align*}
    |\mathbb{E}_{zC}[F(X)] |&=| \lim_{\varepsilon \rightarrow 0}
                              \mathbb{E}_{\frac{\lvert
                              z\rvert^2}{\Re z}C_\varepsilon}[\frac{1}{\sqrt{(\cos \alpha e^{\imath\alpha})^n}} e^{\frac{\imath\Im z}{2\lvert
                              z\rvert^2}\langle X,X \rangle_{C_\varepsilon^{-1}}}F(X)]|\\
                            &= \lim_{\varepsilon \rightarrow 0}|\mathbb{E}_{\frac{\lvert
                              z\rvert^2}{\Re z}C_\varepsilon}[\frac{1}{\sqrt{(\cos \alpha e^{\imath\alpha})^n}} e^{\frac{\imath\Im z}{2\lvert
                              z\rvert^2}\langle X,X \rangle_{C_\varepsilon^{-1}}}F(X)]|\\
                            &\leq  \sup_{X \in \mathbb{R}^n} \big| \frac{1}{\sqrt{(\cos \alpha e^{\imath\alpha})^n}} e^{\frac{\imath\Im z}{2\lvert
                              z\rvert^2}\langle X,X \rangle_{C_\varepsilon^{-1}}}F(X) \big|=\frac{1}{\cos^{n/2}\alpha }
                              \; \sup_{X \in \mathbb{R}^n} | F(X ) |\;.
  \end{align*}
\end{proof}

\subsection{The copies trick}\label{replicatr}
\begin{lemma}[The copies trick]\label{replica}
Let $n$ be a positive integer, $z\in\{\Re z>0\}$ and 
 $F \in L^n(\R,\mu_{\textfrac{\lvert z\rvert^2}{\Re z} 
})$ a $\C$-valued function. Then $F^{\otimes n }:\R^n\to\C$, $(X_i)_{1\leq i\leq n} \mapsto
\prod_{1=n}^n F(X_i)$ is in $L^1(\R^n,\mu_{{\textfrac{\lvert
      z\rvert^2\mathbbm{1}_n}{\Re z}} })$ and furthermore we
have:
\begin{align} \nonumber
 \mathbb{E}_{z }[ F^n(x)]  = \mathbb{E}_{z  \mathbbm{1}_n}[F^{\otimes n}(X)]
\end{align}
\end{lemma}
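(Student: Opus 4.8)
The plan is to exploit that $\mathbbm{1}_n$ has rank one, with image the diagonal line $\Delta=\R\,(1,\dots,1)$ and kernel $\Delta^{\perp}$, so that a Gaussian with covariance proportional to $\mathbbm{1}_n$ is supported on $\Delta$, where $F^{\otimes n}$ collapses to $F^n$ in a single variable. Concretely, $\lvert z\rvert^2\mathbbm{1}_n/\Re z$ is a nonnegative real multiple of $\mathbbm{1}_n$, hence an honest symmetric positive semi-definite matrix, and by the discussion of degenerate Gaussians in appendix~\ref{gaussexp} the measure $\mu_{\lvert z\rvert^2\mathbbm{1}_n/\Re z}$ is the law of $(Y,\dots,Y)$ with $Y\sim\mathcal N(0,\lvert z\rvert^2/\Re z)$: this vector is centered Gaussian and every entry of its covariance matrix equals $\lvert z\rvert^2/\Re z$, which pins it down by uniqueness. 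On the support of that measure $F^{\otimes n}(X)=F(Y)^n$, so $\int\lvert F^{\otimes n}\rvert\,d\mu_{\lvert z\rvert^2\mathbbm{1}_n/\Re z}=\bbE[\lvert F(Y)\rvert^{n}]=\lVert F\rVert_{L^{n}(\mu_{\lvert z\rvert^2/\Re z})}^{n}<\infty$, which proves $F^{\otimes n}\in L^1$; together with the hypothesis $F\in L^{n}$ (equivalently $F^{n}\in L^1(\mu_{\lvert z\rvert^2/\Re z})$) this makes both sides of the claimed identity well defined.

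For the identity itself, the case $z\in\R_{>0}$ is immediate from the description above, since then $\bbE_{z\mathbbm{1}_n}[F^{\otimes n}(X)]=\bbE[\prod_{i}F(Y)]=\bbE[F(Y)^{n}]=\bbE_z[F^{n}(x)]$. For general $z$ with $\Re z>0$ I would use the operator form of the definition, $\bbE_{zC}[G]=[e^{\frac z2\langle\partial,\partial\rangle_C}G]_{X=0}$, together with rotational invariance: since $\langle\partial,\partial\rangle_{\mathbbm{1}_n}=\big(\sum_{i=1}^{n}\partial_{x_i}\big)^2$, an orthogonal change of variables $Y=OX$ with first coordinate $y_1=\frac1{\sqrt n}\sum_i x_i$ turns this operator into $n\,\partial_{y_1}^2$. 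Because no other derivative $\partial_{y_j}$ appears and one evaluates at the origin, $\bbE_{z\mathbbm{1}_n}[F^{\otimes n}]=[e^{\frac{zn}2\partial_{y_1}^2}g(y_1)]_{y_1=0}$, where $g$ is the restriction of $\prod_i F(x_i)$ to $y_2=\dots=y_n=0$; on $\Delta$ all the $x_i$ coincide and equal $y_1/\sqrt n$, so $g(y_1)=F(y_1/\sqrt n)^{n}$. The substitution $x=y_1/\sqrt n$ sends $e^{\frac{zn}2\partial_{y_1}^2}$ to $e^{\frac z2\partial_x^2}$ and $y_1=0$ to $x=0$, whence $\bbE_{z\mathbbm{1}_n}[F^{\otimes n}]=[e^{\frac z2\partial_x^2}F(x)^{n}]_{x=0}=\bbE_z[F^{n}]$.

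The computation in the last step is clean as a formal manipulation, but since for complex covariance the Gaussian expectation is \emph{defined} via the regularisation $C_\varepsilon=C+\varepsilon P$ of appendix~\ref{app:compbound}, the rigorous argument should repeat it for $\mathbbm{1}_n+\varepsilon P$ (which is invertible, equal to $n$ on $\Delta$ and to $\varepsilon$ on $\Delta^{\perp}$): in the adapted orthonormal basis the $n-1$ transverse directions become one-dimensional complex Gaussians of variance $z\varepsilon$ that concentrate at the origin as $\varepsilon\to0$ while contributing a determinant factor that is exactly absorbed by the normalisation built into the definition, and the diagonal direction remains a complex Gaussian of variance $zn$; one then lets $\varepsilon\to0$ and rescales $x=y_1/\sqrt n$. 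The only genuinely delicate point is justifying this interchange of limit and expectation, for which the $L^1$ bound of the first paragraph together with the concentration of the transverse Gaussians (a standard weak-convergence and uniform-integrability argument) should suffice; everything else is a change of variables.
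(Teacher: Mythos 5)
Your proof is correct and follows essentially the same route as the paper's: both exploit that $\mathbbm{1}_n=n\,u\otimes u$ is rank one, pass to an orthonormal basis adapted to the diagonal, regularise the degenerate covariance by $\varepsilon$ times the projector on the kernel, and let the transverse Gaussians concentrate so that only the rescaled diagonal direction survives. Your identification of $\mu_{s\mathbbm{1}_n}$ as the law of $(Y,\dots,Y)$ and your explicit treatment of complex $z$ via the regularised definition merely make explicit the $L^1$ membership and the step the paper compresses into ``substituting $\mathbbm{1}_n$ with $z\mathbbm{1}_n$''.
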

\begin{proof}
For simplicity we take $z=1$. We denote by $u$ the vector of
$\mathbb{R}^n$ with all entries $1/\sqrt{n}$, such that $\mathbbm{1}_n
=n u \otimes u$. Let $v_2,...,v_n \in \R^n$ such that
$(u,v_2,...,v_n)$ is an orthonormal basis of $\R^n$. We aim to
understand the action of $\mu_{\mathbbm{1}_n}$ on a test function $G :
\R^n \rightarrow \C$. For $\varepsilon>0$, let us define
$C_\varepsilon$ by $C_\varepsilon = \mathbbm{1}_n + \varepsilon
\sum_{i=2}^n v_i \otimes v_i$ so that $C_\varepsilon \rightarrow
\mathbbm{1}_n $ as $\varepsilon \rightarrow 0$. Then,
\begin{align}
    \mathbb{E}_{\mathbbm{1}_n}[G(X)] &= \lim_{\varepsilon \rightarrow
      0} \int_{\mathbb{R}^n} \frac{1}{\sqrt{(2\pi)^n
        \det{C_{\varepsilon}}}}e^{-\frac{1}{2}\langle X,X
      \rangle_{C_\varepsilon^{-1}}} G(X) d^n X\nonumber
    \\ &= \lim_{\varepsilon
      \rightarrow 0} \int_{\mathbb{R}^n} \frac{1}{\sqrt{(2\pi)^n
        n\varepsilon^{n-1}}}e^{-\frac{1}{2}(n^{-1}y_1^2+\varepsilon^{-1}
      \sum_{i=2}^n y_i^2)} G(y_1 u +\sum_{i=2}^n y_iv_i) \,dy_1
    \prod_{i=1}^n dy_i \nonumber
    \\ &= \lim_{\varepsilon \rightarrow
      0} \int_{\mathbb{R}^n} \frac{1}{\sqrt{(2\pi)^n
        \varepsilon^{n-1}}}e^{-\frac{1}{2}(y_1^2+\varepsilon^{-1}
      \sum_{i=2}^n y_i^2)} G(y_1 \sqrt{n}u +\sum_{i=2}^n y_iv_i)
    \,dy_1 \prod_{i=1}^n dy_i \nonumber
    \\ &=\lim_{\varepsilon
      \rightarrow 0} \mathbb{E}_{1\otimes
      \varepsilon^{\otimes(n-1)}}[G(y_1\sqrt{n}u+\sum_{i=2}^n
      y_iv_i)]\nonumber
    \\ &= \mathbb{E}_{1\otimes
      0^{\otimes(n-1)}}[G(y_1\sqrt{n}u+\sum_{i=2}^n y_iv_i)]\nonumber
    \nonumber \\ &= \mathbb{E}_1[G(y_1,...,y_1)]\nonumber
\end{align}
To go from the first to the second line, we perform a change of
variable from $X$ to $Y=y_1u+\sum_{i=2}^n y_iv_i$ whose Jacobian
determinant is 1, and to go from the second line to the third line, we
perform the change of variable $y_1$ becomes $\sqrt{n}y_1$. Line four
is a simple rewriting of line three with $y_1 \sim \mu_1$ and $y_i
\sim \mu_{\varepsilon}$ for all $2\leq i\leq n$, and going from line
four to line five uses the convergence in law of the Normal
distribution to the Dirac distribution, while line six is simply the
expectation of the Dirac measure. \\

Applying the previous equality to $G=F^{\otimes n}$ and substituting
$\mathbbm{1}_n$ with $z \mathbbm{1}_n$ concludes the proof.
\end{proof}

\newpage
\printbibliography

\end{document}